\pgfplotsset{compat=newest}                         
\pgfplotsset{plot coordinates/math parser=false}
\newtheorem{theorem}{Theorem}
\newtheorem{lemma}{Lemma}
\newtheorem{definition}{Definition}
\newtheorem{proposition}{Proposition}
\newtheorem{remark}{Remark}
\newcommand*\dif{\mathop{}\!\mathrm{d}}
\newcommand{\SA}{\mathcal{S}}
\newcommand{\X}{\mathcal{X}}
\newcommand{\N}{\mathcal{V}}
\newcommand{\VAR}{\text{VAR}}
\newcommand{\Real}{\mathds{R}}
\newcommand{\SetPair}{\mathds{S}}
\newcommand{\D}{\mathds{D}}
\newcommand{\E}{\mathds{E}}
\newcommand{\Set}[1]{\{#1\}}
\newcommand{\supp}{\text{supp}}
\begin{document}

%

%

\twocolumn[

\aistatstitle{Kantorovich Mechanism for Pufferfish Privacy}

\aistatsauthor{Ni Ding}

\aistatsaddress{University of Melbourne} ]

\begin{abstract}
    Pufferfish privacy achieves $\epsilon$-indistinguishability over a set of secret pairs in the disclosed data. This paper studies how to attain $\epsilon$-pufferfish privacy by exponential mechanism, an additive noise scheme that generalizes the Laplace noise. It is shown that the disclosed data is $\epsilon$-pufferfish private if the noise is calibrated to the sensitivity of the Kantorovich optimal transport plan. Such a plan can be obtained directly from the data statistics conditioned on the secret, the prior knowledge of the system. The sufficient condition is further relaxed to reduce the noise power. It is also proved that the Gaussian mechanism based on the Kantorovich approach attains the $\delta$-approximation of $\epsilon$-pufferfish privacy.
\end{abstract}

\section{INTRODUCTION}

Data privacy is about how to protect the confidential information when people are sharing data with each other.
For a data user that is able to analyzes the disclosed dataset and computes statistics about it, the purpose is to prevent any prediction on the sensitive attributes or secrets, e.g., ID, age, gender, race, etc., that could be maliciously used for discrimination or unfair decision making.
Thus, data protection nowadays is far beyond anonymization, but more of \emph{inference control} \citep{Dwork2008Survey}.

Differential privacy \citep{Dwork2014book} ensures the statistical indifference about the secret when the adversary is collecting the aggregated statistics about the underlying population. For all neighboring databases $s_i$ and $s_j$ that differ only in one entry, $\epsilon$-differential privacy upper bounds the statistical distance between the probabilities $\Pr(\tilde{f}(s_i)=y)$ and $\Pr(\tilde{f}(s_j)=y)$ by a positive threshold $\epsilon$. Here, $\tilde{f}$ denotes the noised, or randomized, query function $f$.
Differential privacy is a rigorous mathematical definition of privacy that can be easily applied to machine learning tasks, e.g., the contingency table release \citep{Barak2007Contingency}, the privacy-preserving data mining~\citep{Dwork2004PivDM}.

On the other hand, information theorists study data privacy in a Bayesian inference setting~\citep{PvsInfer2012,Lalitha2013TIFS,Ding2020TIFS}. The secret is treated as a random variable $S$ that is correlated with the public data $X$ to be released.
For $\Pr(S=s)$ being the adversary's prior belief of the secret, the purpose is to randomize $X$ and generate the sanitized data $Y$ to reduce the information gain on $S$, i.e., $Y$ should reduce the difference between the posterior belief $\Pr(S=s|Y=y)$ and prior belief $\Pr(S=s)$.
This can be translated, by Bayes' rule, to bounding the statistical distance of the conditional probabilities $\Pr(Y=y|S=s_i)$ and $\Pr(Y=y|S=s_j)$ by $\epsilon$ for each pair of secret instances $s_i$ and $s_j$, which is called $\epsilon$-local differential privacy~\citep{RDP2017,MaxL2020,Ding2021ISIT}.

\emph{Pufferfish privacy}:
\citet{Pufferfish2012KiferConf,Pufferfish2014Kifer} introduced a more general privacy framework called `pufferfish'.
For $\rho$ being the prior knowledge of the system, $\epsilon$-pufferfish privacy enforces statistical indistinguishability between $\Pr(Y=y|S=s_i,\rho)$ and $\Pr(Y=y|S=s_j,\rho)$ over all pairs of secrets $(s_i,s_j)$ in a discriminative pair set $\SetPair$.
This is a more flexible and practical setting in that: $\SetPair$ can be specified by all secret pairs that actually raise the privacy concerns in the real application; $\rho$ could denote the side information obtained by the adversary, which incorporates the concept of Bayesian inference in information-theoretic data privacy.
It is also shown by \citet{Pufferfish2014Kifer} that, for specific $\SetPair$ and $\rho$, $\epsilon$-pufferfish privacy is equivalent to $\epsilon$-differential privacy \citep{Dwork2006} and $\epsilon$-indistinguishability \citep{CalibNoiseDP}.

\emph{Privatization}:
The question then is how to privatize the data to attain pufferfish privacy.
The information-theoretic solution is to determine a privacy-preserving encoding function $\Pr(Y=y|X=x)$ for each piece of message $x$ and codeword $y$ \citep{Makhdoumi2014PF}. But, this is only practical for discrete and finite alphabet and not as convenient as the additive noise (e.g., Laplace) mechanism, where we only need to calibrate the parameter of the noise distribution.
There are other attempts in the literature, e.g., segmented noise mechanism for publishing counting query and histogram~\citep{Pufferfish2014Kifer}, Laplace mechanism for monitoring web browsing behavior~\citep{Liang2020Web}.
However, these methods only apply to specific applications, e.g., a particular query function, Markovian assumption about the prior knowledge $\rho$.

The \emph{challenge} here is that the statistics of $Y$ is caused by the randomness in the data regulation scheme, as well as the intrinsic correlation between $S$ and $X$, i.e., the conditional probability $\Pr(X=x|S=s,\rho)$.\footnote{This is the reason that pufferfish privacy is considered a generalization of differential privacy for correlated data. An equivalent problem is how to attain differential privacy when the query answer $f$ itself is a randomized function, for which the noise calibration by $\ell_1$-sensitivity \citep{CalibNoiseDP} does not directly apply. See Section~\ref{sec:main}. }
\cite{PufferfishWasserstein2017Song} proposed an additive noise scheme based on the Wasserstein metric in the probability space: calibrating Laplace noise to the maximum $\infty$-Wasserstein distance over all secret pairs in $\SetPair$ attains pufferfish privacy.
While this method generally applies to any system, computing the $\infty$-Wasserstein distance is hard.\footnote{This is due to the difficulty (non-convexity) in obtaining the optimal transport plan for the $\infty$-Wasserstein metric. See~\cite{Champion2008InfW,DePascale2019InfW}.}
\cite{PufferfishWasserstein2017Song} then resorted to a Markov quilt mechanism, which does not require Wasserstein metric, but only works in Bayesian network models.

\subsection{Our Contributions}

In this paper, we propose a Kantorovich mechanism for attaining pufferfish privacy that generally applies to any data $S$ and $X$ and the prior knowledge $\rho$.

Our main contributions are the following.

\begin{enumerate}
    \item We consider the exponential mechanism, an additive noise scheme that generalizes Laplace noise.
     A sufficient condition is derived showing that pufferfish privacy is attained by calibrating noise to the sensitivity of the Kantorovich optimal transport plan. This transport plan can be directly determined by the conditional probabilities $\Pr(X=x|S=s_i,\rho)$ and $\Pr(X=x|S=s_i,\rho)$  for all pairs of secrets $s_i$ and $s_j$ in $\SetPair$.
     It is also proved that the Gaussian mechanism based on this Kantorovich approach attains $(\epsilon,\delta)$-pufferfish privacy.

    \item We relax the sufficient condition to reduce the noise power. Experimental results show that the relaxed sufficient condition improves data utility of the pufferfish private data regulation schemes.

    \item In a multi-user system, where each user is assigned an independent random variable, we study the $\epsilon$-indistinguishability as to whether a user is present in the system and the value of the random variable he/she obtains. It is shown that, for any deterministic query function $f$, the sensitivity of the Kantorovich optimal transport plan is equivalent to that of $f$, regardless of the randomness in the prior knowledge $\rho$.
        Therefore, pufferfish privacy is attained by calibrating noise to the sensitivity of $f$.
\end{enumerate}

In this paper, we only present a proof sketch for each statement (incl. theorem, lemma and corollary). The full proof and detailed derivation can be found in the supplementary materials.
We use capital letters, e.g, $X$, to denote a random variable and lower case letters, e.g., $x$, to denote the instance of this random variable. Notation $P_{X}(x)$ denotes the probability $\Pr(X = x)$.

\section{PUFFERFISH PRIVACY}

Let $\SA$ be the alphabet of the secret $S$ and $Y$ be the sanitized version of $X$.
We say $Y$ is private if it attains a certain degree of statistical indistinguishability of the sensitive information $S$.
Here, the indistinguishability refers to the bounded probability of $Y$ evoked by two secrets $s_i, s_j \in \SA$:
\begin{equation} \label{eq:LDP}
    \left| \log \frac{P_{Y|S}(y|s_i)}{P_{Y|S}(y|s_j)} \right| \leq \epsilon
\end{equation}
for some $\epsilon > 0 $. We call $\epsilon$ the \emph{privacy budget}.
If \eqref{eq:LDP} holds for all secret pairs $(s_i,s_j) \in \SA^2$, $Y$ is $\epsilon$-local differentially private~\citep{LDP2013MiniMax,LDP2014Lalitha}; if it holds for all neighboring $(s_i,s_j)$,\footnote{The neighborhood is defined by Hamming distance: $d_{H}(s_i,s_j) \leq 1$}
$Y$ attains $\epsilon$-differential privacy \cite{CalibNoiseDP}.

Let $\SetPair \subseteq \SA^2$ be the \emph{discriminative pair set} containing secret pairs $(s_i,s_j)$.
Pufferfish privacy attains $\epsilon$-indistinguishability in $\SetPair$.
%
%

\begin{definition}[Pufferfish Privacy {\citep{Pufferfish2014Kifer}}]
The sanitized data $Y$ attains $(\epsilon,\SetPair)$-pufferfish privacy if
\begin{equation} \label{eq:DefPufferfish}
	 \Big| \log \frac{P_{Y|S}(y|s_i, \rho )}{P_{Y|S}(y|s_j, \rho)} \Big| \leq \epsilon,
\end{equation}
for all $\rho \in \D$ and $(s_i,s_j) \in \SetPair$.
\end{definition}

Here, $\rho$ denotes the prior knowledge, e.g., the hyperparameter, that is sufficient to describe the probability distribution of $X$ given the secret $S$, which we denote by $P_{X|S}(x|s,\rho)$.
$\rho$ could be the true knowledge of $P_{X|S}(x|s,\rho)$. Or, if there are more than one adversary in the system, each $\rho$ can be used to denote the prior belief of an adversary and $\D$ is the set containing all adversaries in the system.
In this case, the $(\epsilon,\SetPair)$-pufferfish privacy guarantees the statistical indistinguishability \eqref{eq:DefPufferfish} against all adversaries $\rho \in \D$.

We show two examples of $S$, $X$ and $\rho$ below.
They will be used to present the main results in this paper.

\paragraph{$V$ independent user system}
Denote $\N$ the finite set that indexes $V = |\N|$ users or participants. Let each user throw a dice $S_i$ independently and denote the outcome by a multiple random variable $S = (S_i \colon i \in \N)$.
Assume $X = f(S)$, where $f$ is a deterministic query function.
If $S_i \in \Set{0,1}$ for all $i \in \N$, $X = f(S) = \sum_{i\in \N} S_i$ is a vote counting function. In this case, let $\rho = (p_i \colon i \in \N)$ with each $p_i$ denoting the probability of the event $S_i = 1$ for a Bernoulli distribution.\footnote{In this case, each $p_i$ could denote the local randomization scheme chosen by individual $i$, as in local differential privacy \cite{LDP2013MiniMax}. }
The conditional probability $P_{X|S}(\cdot|s_i,\rho)$ is fully determined by $\rho$.

\paragraph{Attributes in tabular data}

A tabular dataset needs to be denonymized or privatized before being disclosed to protect the sensitive attributes/columns, e.g., `name', `age', `race'. Denote $S$ and $X$ the sensitive and public attributes,\footnote{Or, $X$ could denote some deterministic function of the public attribute, as in \cite{Pufferfish2014Kifer,PufferfishWasserstein2017Song}. }
respectively.
This is the information-theoretic data privacy problem formulated in \citet[Fig.~1]{Lalitha2013TIFS}.
Let $\rho$ be the empirical joint distribution of $S$ and $X$, which determines the conditional probability $P_{X|S}(\cdot|s_i,\rho)$.
In this case, we can still write $X = f(S)$, while $f$ is not a deterministic but a randomized function.

\subsection{Additive Noise Privatization Scheme}

Consider the additive noise mechanism
$$ Y = X + N$$
where the noise $N$ is independent of $X$.
Denote the probability of $N$ by $P_N(\cdot)$. The conditional probability $P_{Y|S}(\cdot|s,\rho)$ is determined by the convolution
\begin{equation} \label{eq:Convolve}
    P_{Y|S}(y|s,\rho) = \int P_N(y-x) P_{X|S}(x|s,\rho) \dif x.
\end{equation}

For zero-mean noise, we have $\E[Y] = \E[X]$ and the noise variance determines mean square error (MSE) $\E [ (Y - X)^2 ] = \VAR[N]$.
Here, the MSE denotes the average distortion of the released data $Y$, which can be used to quantify the data utility loss, e.g., \citet{Blowfish2014}.
It is clear that for two additive noise mechanisms both attaining pufferfish privacy, the one with less noise power is superior to the other.
In this paper, we use the notation $N_{\theta}$ for noise, where $\theta$ denotes the parameter that determines the probability density function $P_{N_{\theta}}(\cdot)$.

\section{KANTOROVICH MECHANISM}

In this section, we propose a Kantorovich approach, based on the $1$-Wasserstein metric, for attaining the pufferfish privacy.
We first convert the $\infty$-Wasserstein metric in the existing randomization mechanism \citep{PufferfishWasserstein2017Song} to a $1$-Wasserstein distance and propose the Kantorovich solution for Laplace noise.
Then, we extend this result to the exponential mechanism that generalizes the Laplace noise.


\subsection{Preliminaries}

We introduce the notation and definition for the Wasserstein metric and review the Kantororivich optimal transport plan as follows.
For each $\rho$, a joint distribution $\pi \colon \Real^2 \mapsto [0,1]$ is called a \emph{coupling} of $P_{X|S}(\cdot|s_i,\rho)$ and $P_{X|S}(\cdot|s_j,\rho)$ if they are the marginals of $\pi$.\footnote{That is, $\int \pi(x,x')\dif x' = P_{X|S}(s|s_i), \forall x$ and $\int \pi(x,x') \dif x  = P_{X|S}(x'|s_j), \forall x'$.}
Denote $\Gamma(s_i, s_j)$ the set of all couplings for the secret pair $(s_i,s_j) \in \SetPair$.
The $\alpha$th Wasserstein distance is $ W_{\alpha}(s_i, s_j) = \big( \inf_{\pi \in \Gamma(s_i,s_j)} \int d^{\alpha}(x-x') \dif \pi(x,x') \big)^{1/\alpha}$.
For $\alpha = 1$,
\begin{equation} \label{eq:Kantorovich}
    W_{1}(s_i, s_j) = \inf_{\pi \in \Gamma(s_i,s_j)} \int d(x-x') \dif \pi(x,x')
\end{equation}
is called the Kantorovich transportation problem for the mass transport cost $d$.

\paragraph{Kantorovich optimal transport plan $\pi^*$}\citep{Villani2009OPT,Santambrogio2015OPT}
For convex $d$, the minimizer $\pi^*$ of \eqref{eq:Kantorovich} can be directly determined by $P_{X|S}(\cdot|s_i,\rho)$ and $P_{X|S}(\cdot|s_j,\rho)$:
\[ \pi^*(x,x') = \frac{\dif^2}{\dif x  \dif x'} \min \big\{ F_{X|S}(x|s_i,\rho), F_{X|S}(x'|s_j,\rho) \big\}, \]
where $F_{X|S}(\cdot|s_i,\rho)$ and $F_{X|S}(\cdot|s_j,\rho)$ are the cumulative density functions for $P_{X|S}(\cdot|s_i,\rho)$ and $P_{X|S}(\cdot|s_j,\rho)$, respectively.

\subsection{Kantorovich-Laplace mechanism}

For Laplace noise $N_{\theta}$ with the noise distribution $P_{N_{\theta}}(z) = \frac{1}{2\theta} e^{-\frac{|z|}{\theta}}$, it is shown in \citet{PufferfishWasserstein2017Song} that calibrating noise power to
\begin{equation} \label{eq:WInftyLaplace}
    \theta = \frac{1}{\epsilon}\max_{\rho \in \D, (s_i,s_j) \in \SetPair}  W_\infty(s_i, s_j)
\end{equation}
for the $\ell_1$-norm $d(z) = |z|$ attains $(\epsilon,\SetPair)$-pufferfish privacy in $Y$.
Because the minimizer of $W_\infty(s_i, s_j) = \inf_{\pi \in \Gamma(s_i,s_j)} \sup_{(x,x') \in \supp(\pi)} |x-x'|$ is hard to obtain~\citep{Champion2008InfW,DePascale2019InfW}, we convert it to a $W_1$ metric and propose a Kantorovich approach below.

\begin{lemma}[From $W_\infty$ to Kantorovich] \label{lemma:Winfty2W1Laplace}
     Adding Laplace noise $N_{\theta}$ with
     \begin{equation} \label{eq:W1Laplace}
         \theta = \frac{1}{\epsilon} \max_{\rho \in \D, (s_i,s_j) \in \SetPair} \sup_{(x,x')\in \supp(\pi^*)} |x-x'|
     \end{equation}
     attains $(\epsilon, \SetPair)$-pufferfish privacy in $Y$.
\end{lemma}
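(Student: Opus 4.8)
The plan is to revisit the argument underlying the Wasserstein mechanism of \cite{PufferfishWasserstein2017Song} and observe that it never uses optimality of the coupling for the $\infty$-Wasserstein metric, only \emph{feasibility}, i.e.\ that the coupling has the correct marginals; hence it goes through verbatim with the $W_\infty$-minimizer replaced by the explicit Kantorovich plan $\pi^*$, which is computable directly from the prior.

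Fix $\rho \in \D$ and $(s_i,s_j) \in \SetPair$, and write $\mu(x) = P_{X|S}(x|s_i,\rho)$, $\nu(x') = P_{X|S}(x'|s_j,\rho)$. First I would combine the convolution identity \eqref{eq:Convolve} with the fact that $\pi^* \in \Gamma(s_i,s_j)$ has marginals $\mu$ and $\nu$ to rewrite both sides of the pufferfish ratio \eqref{eq:DefPufferfish} as double integrals against $\pi^*$:
\[ P_{Y|S}(y|s_i,\rho) = \int\!\int P_{N_\theta}(y-x)\, \pi^*(x,x') \dif x \dif x', \]
and likewise with $P_{N_\theta}(y-x')$ in place of $P_{N_\theta}(y-x)$ for $s_j$. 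Next, for the Laplace density the triangle inequality $|y-x'| \le |y-x| + |x-x'|$ gives the pointwise bound $P_{N_\theta}(y-x) \le e^{|x-x'|/\theta}\, P_{N_\theta}(y-x')$ for every $(x,x')$. On $\supp(\pi^*)$ we have $|x-x'| \le W^* := \sup_{(x,x')\in\supp(\pi^*)} |x-x'|$, so $P_{N_\theta}(y-x) \le e^{W^*/\theta}\, P_{N_\theta}(y-x')$ holds $\pi^*$-almost everywhere; integrating this against $\pi^*$ yields $P_{Y|S}(y|s_i,\rho) \le e^{W^*/\theta}\, P_{Y|S}(y|s_j,\rho)$. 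By the choice of $\theta$ in \eqref{eq:W1Laplace} we have $W^*/\theta \le \epsilon$ for every such triple $(\rho,s_i,s_j)$; exchanging the roles of $s_i$ and $s_j$ gives the matching lower bound, which together establish \eqref{eq:DefPufferfish}.

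I do not expect a genuine obstacle: the content of the lemma is the recognition that feasibility alone suffices, so the hard-to-find $W_\infty$-minimizer can be swapped for $\pi^*$, which admits the closed form in terms of $\min\{F_{X|S}(x|s_i,\rho), F_{X|S}(x'|s_j,\rho)\}$. As a consistency check, since $\pi^* \in \Gamma(s_i,s_j)$ one has $W^* \ge W_\infty(s_i,s_j)$, so the noise scale in \eqref{eq:W1Laplace} is never below that in \eqref{eq:WInftyLaplace}; the lemma therefore trades a (typically mild) increase in noise power for a tractable transport plan, which also explains why the bound is only sufficient. The only point demanding mild care is the measure-theoretic bookkeeping in passing from the pointwise density inequality to the integrated one (discarding $\pi^*$-null sets, and, when a conditional is atomic, reading the mixed-derivative expression for $\pi^*$ as the associated Lebesgue--Stieltjes measure).
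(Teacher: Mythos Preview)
Your direct argument is correct and in fact anticipates the proof of Theorem~\ref{theo:W1Exponential}: you use only that $\pi^*$ is a feasible coupling, the Laplace triangle inequality, and the sup over $\supp(\pi^*)$ to bound the ratio. The paper's own proof of Lemma~\ref{lemma:Winfty2W1Laplace} takes a different route with a stronger conclusion: it shows that the two noise scales \eqref{eq:WInftyLaplace} and \eqref{eq:W1Laplace} are \emph{equal}. The idea is to rewrite $\inf_{\pi}\sup_{(x,x')\in\supp(\pi)}\big(\tfrac{|x-x'|}{\theta}-\epsilon\big)=0$ as the Kantorovich problem $\inf_{\pi}\int\big[\tfrac{|x-x'|}{\theta}-\epsilon\big]_+\,d\pi=0$, and then use that on $\Real$ the monotone plan $\pi^*$ minimizes \emph{every} convex transport cost, so the same $\pi^*$ solves this $[\cdot]_+$ problem too. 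Your consistency-check remark that $W^*\ge W_\infty(s_i,s_j)$ with a ``typically mild increase in noise power'' is therefore too pessimistic: the paper's argument gives $W^*=W_\infty(s_i,s_j)$, so nothing is traded away. What your approach buys is self-containment (no appeal to \citet{PufferfishWasserstein2017Song}) and immediate generalization to the exponential mechanism; what the paper's approach buys is the identification of \eqref{eq:W1Laplace} as an exact, computable reformulation of the $W_\infty$ mechanism rather than a relaxation.
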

\begin{proof}
We have $\theta$ in \eqref{eq:WInftyLaplace} equal to
\begin{equation} \label{eq:lemma:Winfty2W1Laplace}
\max_{\substack{\rho \in \D,\\ (s_i,s_j) \in \SetPair}} \Big\{ \theta \colon \inf\limits_{\pi \in \Gamma(s_i,s_j)} \int \Big[ \frac{|x-x'|}{\theta} - \epsilon \Big]_+ \dif \pi(x,x') = 0 \Big\}
\end{equation}
where $[z]_+ = \max\Set{z,0}$ is convex.
For each $\theta$, the minimizer of $\inf\limits_{\pi \in \Gamma(s_i,s_j)} \int \Big[ \frac{|x-x'|}{\theta} - \epsilon \Big]_+ \dif \pi(x,x')$ is the Kantorovich optimal transport plan $\pi^*$. Therefore, the maximum value of $\theta$ in \eqref{eq:lemma:Winfty2W1Laplace} equals to \eqref{eq:W1Laplace}.
\end{proof}

It is clear in the proof of Lemma~\ref{lemma:Winfty2W1Laplace} (see the full proof in Section~\ref{app:lemma:Winfty2W1Laplace} in the supplementary material) that the Wasserstein mechanism in the order of $\alpha = \infty$ proposed in \citet{PufferfishWasserstein2017Song} is equivalent to the Kantorovich-Laplace mechanism in Lemma~\ref{lemma:Winfty2W1Laplace}.
See Section~\ref{app:lemma:Winfty2W1LaplaceEx} in the supplementary material how to efficiently compute the solutions to the two examples in~\cite{PufferfishWasserstein2017Song} by Lemma~\ref{lemma:Winfty2W1Laplace}.

\subsection{Exponential Mechanism}
Let $d$ be a metric, i.e., $d$ is nonnegative, symmetric $d(z) = d(-z), \forall z$, and satisfies the triangular inequality $ d(z) \leq d(\delta) + d(z-\delta), \forall z, \delta$.
Consider the exponential mechanism \citep[Section~3.3]{CalibNoiseDP}, where the noise distribution is characterized by an exponential function $P_{N_\theta} (z) \propto e^{ -\eta(\theta) d(z)}$.
Assume $\eta \propto \frac{1}{\theta}$.
For Laplace mechanism, $\eta(\theta) = 1/\theta$ and $d(z) = |z|$.
By the triangular inequality,
\begin{equation}
	P_{N_{\theta}}(y-x) \leq e^{\eta(\theta) d(x-x')}  P_{N_{\theta}}(y-x') , \quad \forall x,x',y
\end{equation}
where $e^{\eta(\theta) d(x-x')}$ denotes an upper bound on the probability mass transport cost from $x$ to $x'$.
This cost upper bound is used in \citet{CalibNoiseDP,Dwork2014book} to prove that the differential privacy is attained by calibrating the standard deviation of the noise to the sensitivity of the query function $f$.
We obtain a similar result for attaining the pufferfish privacy in the $V$ independent user system as follows.

\subsubsection{Calibrating Noise to Sensitivity in $V$ independent User System }

For each user $i$, denote $S_i = a$ the event that user $i$ is present in the system and reports the value $a$ of random variable (i.e., the dice face) of $S_i$.
We write $S_i = \perp_i$ for the event when user $i$ is absent in the system.
Consider the following two discriminative pair sets
\begin{align*}
    &\SetPair_i = \Set{(S_i = a, S_i = b) \colon a, b \in \SA_i},  \\
    &\SetPair_{\perp_i} = \Set{(S_i = a, S_i = \perp_i) \colon a \in \SA_i}.  \\
\end{align*}
where $\SA_i$ denotes the alphabet of $S_i$ for user $i$.
Using $\SetPair_i$ and $\SetPair_{\perp_i}$, \citet{Pufferfish2014Kifer} proved that $(\epsilon,\SetPair)$-pufferfish privacy is equivalent to $\epsilon$-differential privacy \citep{Dwork2006} and $\epsilon$-indistinguishability \citep{CalibNoiseDP}, respectively.

Denote $S_{-i} = (S_{i'} \colon i' \in \N \setminus \Set{i})$ the multiple random variable excluding dimension/user $i$.
For the deterministic query function $f$, let the sensitivity for metric $d$ over the discriminative pair set $\SetPair_i$ be
$$ \triangle_f(\SetPair_i) = \max_{a,b\in \SA_i} \max_{s_{-i}} d \big( f(s_i=a ,s_{-i}) - f(s_i=b,s_{-i}) \big). $$
Note that for deterministic function $f$, the sensitivity $\triangle_f(\SetPair_i)$ is independent of $\rho$.

\begin{lemma} \label{lemma:NindUser}
    In the $V$ independent user system, for any deterministic query function $f$ and any prior knowledge $\rho$, the following holds for all $i \in \N$:
    \begin{enumerate}[(a)]
      \item adding noise $N_{\theta}$ with $\theta = \eta^{-1}(\epsilon / \triangle_f(\SetPair_i))$ attains $(\epsilon,\SetPair_i)$-pufferfish privacy in $Y$,
      \item if $Y$ is $(\epsilon,\SetPair_i)$-pufferfish private, then it is also $(\epsilon,\SetPair_{\perp_i})$-pufferfish private.
    \end{enumerate}
\end{lemma}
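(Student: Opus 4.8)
The plan is to prove the two parts with two standard templates: part (a) by the ``convolution plus triangular-inequality'' argument already used for the Laplace mechanism in Lemma~\ref{lemma:Winfty2W1Laplace}, and part (b) by a mixture (quasi-convexity of $\epsilon$-indistinguishability) argument. For part (a), fix $\rho$, a user $i$, and two secrets $s_i=a$, $s_i=b$ with $a,b\in\SA_i$. The first, and only non-mechanical, step is to exploit that $f$ is deterministic and the coordinates of $S$ are mutually independent: conditioning on $S_i=a$ leaves the law of $S_{-i}$ unchanged, so $P_{S_{-i}|S_i=a,\rho}=P_{S_{-i}|\rho}$ for every $a$, and $P_{X|S}(\cdot|s_i=a,\rho)$ is the pushforward of this common law under $s_{-i}\mapsto f(a,s_{-i})$. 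Substituting into the convolution~\eqref{eq:Convolve} gives $P_{Y|S}(y|s_i=a,\rho)=\int P_{N_{\theta}}\!\big(y-f(a,s_{-i})\big)\,P_{S_{-i}|\rho}(s_{-i})\dif s_{-i}$. Next I would apply inside the integrand the triangular-inequality bound $P_{N_{\theta}}(y-x)\le e^{\eta(\theta)d(x-x')}P_{N_{\theta}}(y-x')$ with $x=f(a,s_{-i})$ and $x'=f(b,s_{-i})$, and then bound $d(f(a,s_{-i})-f(b,s_{-i}))\le\triangle_f(\SetPair_i)$ uniformly in $s_{-i}$ by the definition of the sensitivity; pulling the constant out of the integral leaves $P_{Y|S}(y|s_i=a,\rho)\le e^{\eta(\theta)\triangle_f(\SetPair_i)}P_{Y|S}(y|s_i=b,\rho)$. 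Taking $\theta=\eta^{-1}(\epsilon/\triangle_f(\SetPair_i))$ makes the exponent exactly $\epsilon$, and swapping $a$ and $b$ gives the matching lower bound, i.e.\ \eqref{eq:DefPufferfish} on $\SetPair_i$; since $\triangle_f(\SetPair_i)$ does not involve $\rho$, the same $\theta$ works for all $\rho\in\D$. Read through the coupling $s_{-i}\mapsto(f(a,s_{-i}),f(b,s_{-i}))$, this computation also exhibits a coupling of $P_{X|S}(\cdot|s_i=a,\rho)$ and $P_{X|S}(\cdot|s_i=b,\rho)$ whose transport cost is at most $\triangle_f(\SetPair_i)$ on its support, which gives the sensitivity-equivalence statement from the introduction.

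For part (b), the plan is to observe that the ``absent'' conditional $P_{X|S}(\cdot|s_i=\perp_i,\rho)$, and hence (since the convolution in~\eqref{eq:Convolve} is linear in the input distribution) the output law $P_{Y|S}(\cdot|s_i=\perp_i,\rho)$, is a convex combination $\sum_{a\in\SA_i}w_a\,P_{Y|S}(\cdot|s_i=a,\rho)$ of the present-value conditionals, with $w_a\ge0$ and $\sum_a w_a=1$ --- this is what the pufferfish data-evolution scenario of \citet{Pufferfish2014Kifer} prescribes for the absent event (for the counting query it degenerates to the null value $0\in\SA_i$). Granting $(\epsilon,\SetPair_i)$-pufferfish privacy, i.e.\ $P_{Y|S}(y|s_i=a,\rho)\le e^{\epsilon}P_{Y|S}(y|s_i=b,\rho)$ for all $a,b\in\SA_i$, I would then write $P_{Y|S}(y|s_i=a,\rho)=\sum_b w_b P_{Y|S}(y|s_i=a,\rho)\le e^{\epsilon}\sum_b w_b P_{Y|S}(y|s_i=b,\rho)=e^{\epsilon}P_{Y|S}(y|s_i=\perp_i,\rho)$, and symmetrically $P_{Y|S}(y|s_i=\perp_i,\rho)\le e^{\epsilon}P_{Y|S}(y|s_i=a,\rho)$, which is \eqref{eq:DefPufferfish} on $\SetPair_{\perp_i}$.

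The step I expect to be the real obstacle is not in part (a) --- that template is routine once independence of the coordinates of $S$ removes the dependence of the law of $S_{-i}$ on $S_i$ --- but in part (b): the whole argument rests on the absent conditional genuinely lying in the convex hull of the present-value conditionals, so the care goes into checking that the adopted modelling of $S_i=\perp_i$ has this property; once it does, the preservation of $\epsilon$-indistinguishability under mixtures is the one-line inequality chain above. A minor secondary point, if one wants the Kantorovich reading of part (a), is to note that although $\pi^*$ only minimizes the average transport cost it still cannot stretch a pair in its support beyond $\triangle_f(\SetPair_i)$, because the pushforward coupling already attains that bound and the support cost of the one-dimensional monotone (Kantorovich) coupling equals the $\infty$-Wasserstein gap, which is no larger than any coupling's.
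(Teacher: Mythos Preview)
Your proposal is correct and matches the paper's proof essentially line for line: part (a) is the same convolution-plus-triangular-inequality argument over the common law $P_{S_{-i}}$ (independence making $P_{S_{-i}|S_i=a}=P_{S_{-i}}$), and part (b) is the same mixture argument, the paper having established in its preliminary computations that $P_{Y|S_i}(y|\perp_i,\rho)=\int P_{Y|S_i}(y|s_i,\rho)P_{S_i}(s_i)\dif s_i$, i.e.\ exactly your convex-combination identity with $w_a=P_{S_i}(a)$. Your closing remark on the Kantorovich reading of part (a) also anticipates the paper's alternative proof via Theorem~\ref{theo:W1Exponential}.
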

\begin{proof}
For any $a,b \in \SA_i$, we have the statistical indistinguishability in $Y$ bounded by the mass transport cost upper bound function:
$P_{Y|S_i}(y|a,\rho) \leq e^{\eta(\theta) d(f(s_i=a,s_{-i})-f(s_i=b,s_{-i})) } P_{Y|S_i}(y|b,\rho)$.
Then, for all $a,b \in \SA_i$, $y$ and $\rho$,
$P_{Y|S_i}(y|a,\rho) \leq e^{\eta(\theta) \triangle_f(\SetPair_i)} P_{Y|S_i}(y|b,\rho)$. Therefore, (a) is a sufficient condition for attaining $(\epsilon, \SetPair_i)$-pufferfish privacy.

Using the fact that $ \min_{s_i} P_{Y|S_i}(y |s_i,\rho) \leq P_{Y|S_i}(y | \perp_i,\rho) \leq \max_{s_i} P_{Y|S_i}(y |s_i,\rho)$,
for $(\epsilon,\SetPair_i)$-pufferfish private $Y$, we have $|\log \frac{P_{Y|S_i}(y|\perp_i,\rho)}{P_{Y|S_i}(y|a,\rho)}| \leq \max_{s_i} |\log  |\frac{P_{Y|S_i}(y|s_i,\rho)}{P_{Y|S_i}(y|a,\rho)}| \leq \epsilon$, i.e., $(\epsilon,\SetPair_{\perp_i})$-pufferfish privacy attains simultaneously.
\end{proof}

See Sections~\ref{appA:lemma:NindUser} and \ref{appB:lemma:NindUser} in the supplementary material for the detailed derivation and proof.
In Section~\ref{sec:Lemma2Verify}, we will verify Lemma~\ref{lemma:NindUser} by the Kantorovich optimal transport plan $\pi^*$, where it is revealed that $\triangle_f(\SetPair_i)$ coincides with the sensitivity of $\pi^*$.

\subsection{Sufficient condition}
\label{sec:main}

Following Lemma~\ref{lemma:NindUser}, for $f$ being a randomized function, adding noise $N_{\theta}$ with $\theta = \eta^{-1}(\epsilon/\triangle_f(\SetPair))$ also attains pufferfish privacy.
However, we should take into account the domain of $f$ as well as the randomness in $\rho$.
That is, for $X = f(S)$ where $f$ is a randomized function, the sensitivity of $f$ for metric $d$ over the discriminative pair set $\SetPair$ is
\begin{equation} \label{eq:SensitivityRandom}
    \triangle_f(\SetPair) =
    \max_{(s_i,s_j)\in \SetPair} \max_{\substack{x \in \supp(P_{X|S}(\cdot|s_i,\rho)),\\x' \in \supp(P_{X|S}(\cdot|s_j,\rho))}} d \big( x - x' \big).
\end{equation}
The probability mass $P_{X|S}(\cdot|s,\rho)$ could spread over a wide range of $X$ that significantly increases $\triangle_f(\SetPair)$.
Section~\ref{sec:exp} shows an example when $\supp(P_{X|S}(\cdot|s_i)) = \supp(P_{X|S}(\cdot|s_j)) = \X$, where $\X$ denotes the alphabet containing all possible values of $X$.
In this case, the sensitivity $f$ is as large as the pairwise distance in $\X$: $\triangle_f(\SetPair) = \triangle_{\X} = \max_{x,x' \in \X} d(x-x')$ and the resulting $\theta = \eta^{-1}(\epsilon/\triangle_{\X})$ could make the noise power too large to convey any useful information of $X$ in the disclosed dataset.
The following theorem proposes another approach based on a distance metric over the probability space.
The full proof is in Section~\ref{app:theo:W1Exponential} in the supplementary material

\begin{figure*}[ht]
\centerline{
\subfigure[]{\includegraphics[scale=0.68]{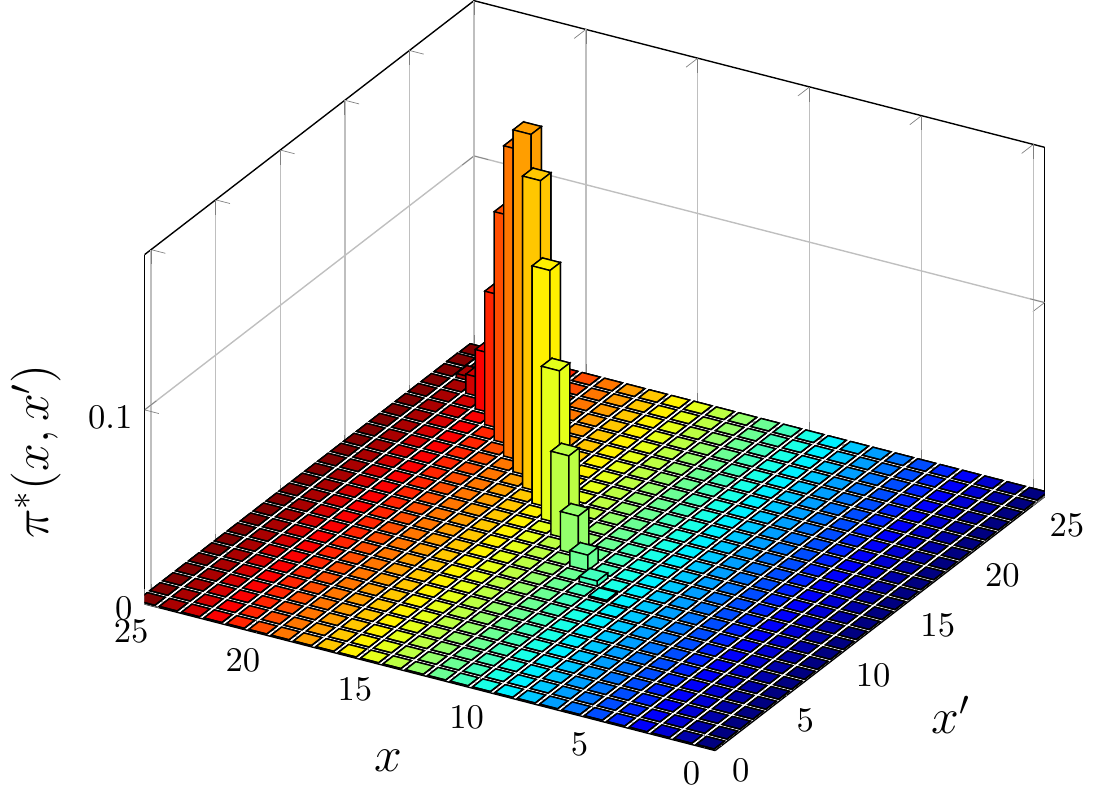}} \qquad
\subfigure[]{\includegraphics[scale=0.68]{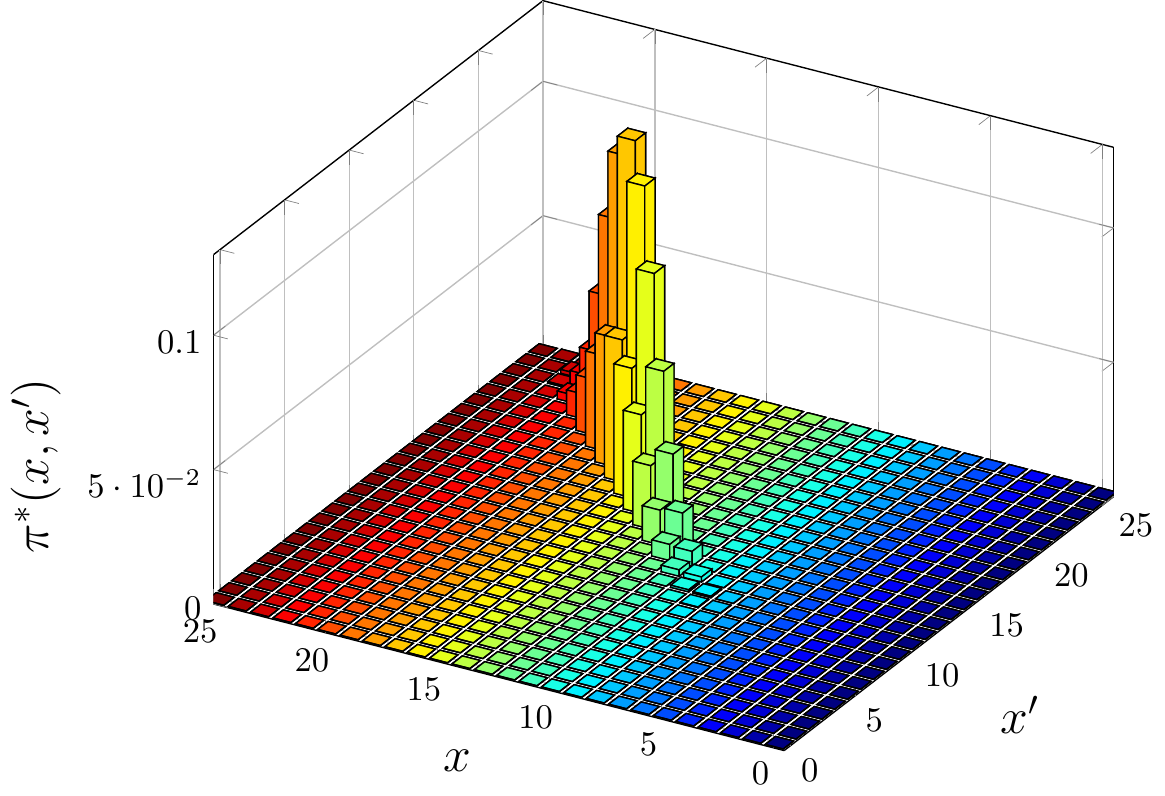}}
}
\caption{The Kantorovich optimal transportation plan $\pi^*$ in the $V$ independent user system for the discriminative pair set $\SetPair = \Set{(S_i = 0, S_i=1) \colon i \in \N }$ in (a) and $\SetPair = \Set{(S_i = 0, S_i = \perp_i) \colon i \in \N}$ in (b). The function $f$ is a counting query $X = f(S) = \sum_{i \in \N} S_i$ and $S = (S_i \colon i \in \N) \sim \text{Binomial}(V,p)$. We set $V = 25$ and $p = 0.7$. } \label{fig:Binomial}
\end{figure*}

\begin{theorem}[Kantorovich-exponential mechanism] \label{theo:W1Exponential}
    For the exponential mechanism, adding noise $N_{\theta}$ with\footnote{In~\eqref{eq:W1ExponentialSym}, there is a Kantorovich optimal transport plan $\pi^*$ for each $\rho \in \D$ and $(s_i,s_j) \in \SetPair$. }
    \begin{equation} \label{eq:W1ExponentialSym}
        \theta = \max_{\rho \in \D, (s_i,s_j) \in \SetPair} \eta^{-1}  \Big(  \epsilon / \sup_{(x,x') \in \supp(\pi^*)} d(x-x')  \Big)
    \end{equation}
\end{theorem}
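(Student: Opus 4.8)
The plan is to use the Kantorovich optimal transport plan $\pi^*$ not for the value of the transport cost it achieves, but simply as a \emph{coupling}: by construction $\pi^* \in \Gamma(s_i,s_j)$, so its two marginals are $P_{X|S}(\cdot|s_i,\rho)$ and $P_{X|S}(\cdot|s_j,\rho)$. Fix $\rho \in \D$ and $(s_i,s_j) \in \SetPair$ and abbreviate $D^* = \sup_{(x,x') \in \supp(\pi^*)} d(x-x')$. Since $\pi^*$ has first marginal $P_{X|S}(\cdot|s_i,\rho)$, the convolution \eqref{eq:Convolve} lifts to the plane,
\[ P_{Y|S}(y|s_i,\rho) = \int P_{N_\theta}(y-x)\,P_{X|S}(x|s_i,\rho)\dif x = \iint P_{N_\theta}(y-x)\dif\pi^*(x,x'). \]
First I would insert the exponential-mechanism inequality $P_{N_\theta}(y-x) \le e^{\eta(\theta)d(x-x')}P_{N_\theta}(y-x')$ under this double integral. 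On $\supp(\pi^*)$ we have $d(x-x') \le D^*$, so the factor $e^{\eta(\theta)d(x-x')}$ is uniformly bounded by $e^{\eta(\theta)D^*}$ and can be pulled out; what remains, $\iint P_{N_\theta}(y-x')\dif\pi^*(x,x')$, equals $P_{Y|S}(y|s_j,\rho)$ because the second marginal of $\pi^*$ is $P_{X|S}(\cdot|s_j,\rho)$. This yields $\log\big(P_{Y|S}(y|s_i,\rho)/P_{Y|S}(y|s_j,\rho)\big) \le \eta(\theta)\,D^*$.

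The reverse inequality is obtained by repeating the argument with $s_i$ and $s_j$ interchanged, using the reflected plan $(x,x') \mapsto \pi^*(x',x) \in \Gamma(s_j,s_i)$; since $d$ is symmetric, its support carries the same value $D^*$, so $\big|\log\big(P_{Y|S}(y|s_i,\rho)/P_{Y|S}(y|s_j,\rho)\big)\big| \le \eta(\theta)\,D^*$ for every $y$. It then remains to calibrate $\theta$. Because $\eta \propto 1/\theta$ is strictly decreasing, so is $\eta^{-1}$, hence $D \mapsto \eta^{-1}(\epsilon/D)$ is increasing; consequently the $\theta$ prescribed in \eqref{eq:W1ExponentialSym} equals $\eta^{-1}(\epsilon/D^*_{\max})$, where $D^*_{\max}$ is the largest of the quantities $D^*$ over all $\rho \in \D$ and $(s_i,s_j) \in \SetPair$. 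With this choice $\eta(\theta) = \epsilon/D^*_{\max}$, so $\eta(\theta)\,D^* \le \epsilon$ for every admissible triple, which is exactly \eqref{eq:DefPufferfish}; this also makes clear why the Kantorovich plan is used — to keep $D^*$, and hence the noise power $\VAR[N_\theta]$, as small as possible, improving on the crude sensitivity bound \eqref{eq:SensitivityRandom}.

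The step I expect to be the real obstacle is the lifting identity, i.e., justifying that one may legitimately replace the full $d$-sensitivity over $\supp(P_{X|S}(\cdot|s_i,\rho)) \times \supp(P_{X|S}(\cdot|s_j,\rho))$ by $\sup_{\supp(\pi^*)} d(x-x')$. This requires that $\pi^*$ genuinely exists and is a coupling with the stated marginals — which is where the explicit cumulative-distribution-function formula from the Preliminaries (valid for convex $d$), and more generally the existence of optimal plans for lower-semicontinuous costs, enter — and it requires care when the conditional densities vanish on parts of their support, so that $\supp(\pi^*)$ is a strict subset of the product set. The interchange of integrals above is harmless since $P_{N_\theta} \ge 0$, so Tonelli applies; and the degenerate case $D^* = 0$ (identical conditionals, hence identical $P_{Y|S}$) is trivially consistent with \eqref{eq:DefPufferfish}.
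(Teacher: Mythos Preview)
Your proposal is correct and follows essentially the same route as the paper: write $P_{Y|S}(y|s_i,\rho)$ as an integral against the coupling $\pi^*$, apply the triangle-inequality bound $P_{N_\theta}(y-x)\le e^{\eta(\theta)d(x-x')}P_{N_\theta}(y-x')$, restrict to $\supp(\pi^*)$, and use the symmetry of $d$ for the reverse inequality. The only cosmetic difference is that the paper inserts the detour $\inf_{\pi}\sup_{\supp(\pi)}(\eta(\theta)d(x-x')-\epsilon)\le 0 \Leftrightarrow \inf_{\pi}\int[\eta(\theta)d(x-x')-\epsilon]_+\dif\pi\le 0$ to explain why the $W_1$-optimal plan $\pi^*$ is the right coupling, whereas you take $\pi^*$ as prescribed by the statement and note its optimality role only informally at the end.
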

attains $(\epsilon,\SetPair)$-pufferfish privacy in $Y$.
\begin{proof}
For any pair $(s_i,s_j) \in \SetPair$, we have
\begin{align} \label{eq:W1Exponential:SuffCond1Main}
     &P_{Y|S}(y|s_i,\rho) - e^\epsilon P_{Y|S}(y|s_j,\rho) \nonumber \\
     & =  \int \big( P_{N_{\theta}}(y-x) - e^{\epsilon} P_{N_{\theta}}(y-x') \big) \dif \pi^*(x,x')  \nonumber\\
     & \leq \int P_{N_{\theta}}(y-x') \big( e^{\eta(\theta) d(x-x')}  - e^{\epsilon} \big)  \dif \pi^*(x,x'), \forall y.
\end{align}
For each $y$, \eqref{eq:W1ExponentialSym} is a sufficient condition for $e^{\eta(\theta) d(x-x')} \leq e^{\epsilon}$ for all $x,x'$, by which we have $\frac{P_{Y|S}(y|s_i,\rho)}{P_{Y|S}(y|s_j,\rho)} \leq e^{\epsilon}$.
Due to the symmetric property $d(x-x') = d(x'-x),\forall x,x'$, \eqref{eq:W1ExponentialSym} is also a sufficient condition for $\frac{P_{Y|S}(y|s_j,\rho)}{P_{Y|S}(y|s_i,\rho)} \leq e^{\epsilon}$.
\end{proof}

Theorem~\ref{theo:W1Exponential} essentially states that it is sufficient to only calibrating noise to the maximum pairwise distance over the support of the Kantorovich optimal transport plan $\pi^*$, which can be regarded as the \emph{sensitivity of $\pi^*$}.
It is clear that the maximum sensitivity of $\pi^*$ over all $(s_i,s_j) \in \SetPair$ is no greater than $\triangle_{f}(\SetPair)$. This has also been verified by~\citet[Theorem~3.3]{PufferfishWasserstein2017Song}.
In fact, in most cases, we have
$ \sup_{(x,x') \in \supp(\pi^*)} d(x-x') \ll \triangle_f(\SetPair) . $
See the experimental results in Section~\ref{sec:exp}.

\subsubsection{Interpretation of Lemma~\ref{lemma:NindUser}}
\label{sec:Lemma2Verify}

Theorem~\ref{theo:W1Exponential} in return explains Lemma~\ref{lemma:NindUser}.
For any deterministic query $f$ on the $V$ independent user system, we have the Kantorovich optimal transport plan $\pi^*(x,x') = 0$, for all $x,x'$ such that $d(x-x') > \triangle_f(\SetPair_i)$, i.e.,
\begin{equation} \label{eq:KantSupp}
    \supp(\pi^*) \subseteq \big\{(x,x') \colon d(x-x') \leq \triangle_f(\SetPair_i) \big\}.
\end{equation}
By Theorem~1, tuning $\theta$ to $\eta^{-1} \big( \epsilon / \triangle_f(\SetPair_i) \big)$ attains $(\epsilon,\SetPair_i)$-pufferfish privacy.

It is clear from \eqref{eq:KantSupp} that $\triangle_{f}(\SetPair_i)$ is in fact the sensitivity of the Kantorovich optimal transport plan $\pi^*$ for metric $d$. This is the key point that Lemma~2(a) is valid.

\paragraph{Separable query function}
For all separable query functions $f$ such that $X = f(S) = \sum_{i \in \N} f_i(S_i)$, we have
$$ f(s_i=a,s_{-i}) - f(s_i=b,s_{-i}) = f_i(a) - f_i(b), \forall s_{-i}. $$
In this case, $\pi^*(x,x') = 0$, for all $x,x'$ such that $x-x' \neq f_i(a) - f_i(b)$ and therefore
\begin{equation} \label{eq:KantSuppSep}
    \supp(\pi^*) = \big\{ (x,x') \colon x-x' = f_i(a) - f_i(b) \big\}.
\end{equation}
The sensitivity is $\triangle_f(\SetPair_i) = \max_{a,b\in \SA_i} d ( f_i(a) - f_i(b))$. $(\epsilon,\SetPair_i)$-pufferfish privacy attains by setting $\theta = \eta^{-1}  \big(  \epsilon / \max_{a,b\in \SA_i} d ( f_i(a) - f_i(b)) \big)$.

\begin{figure}[ht]
\centerline{
\includegraphics[scale=0.75]{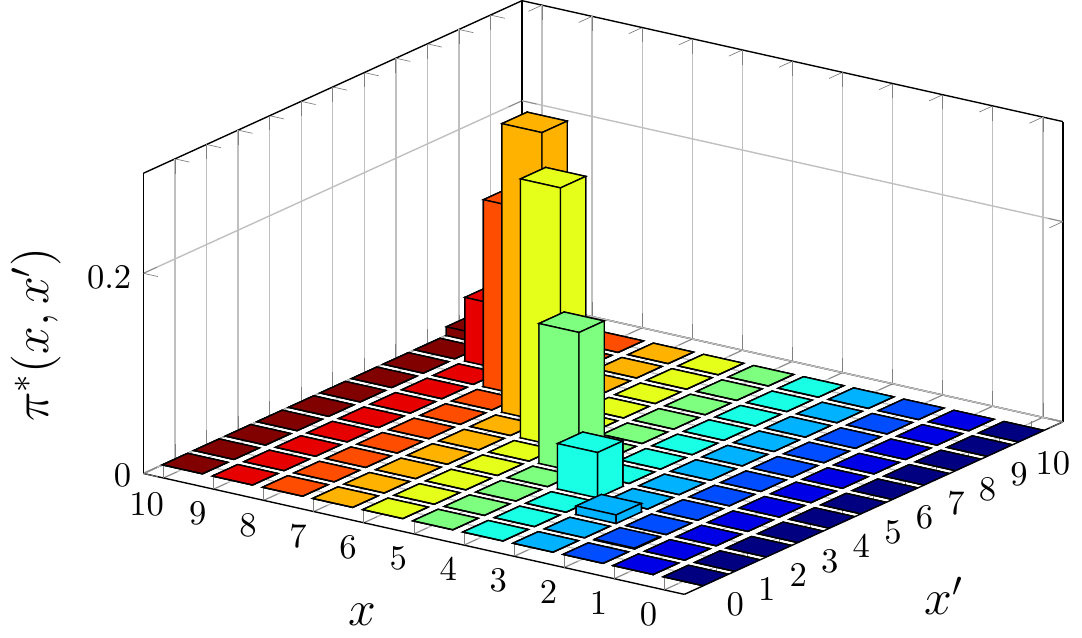}
}
\caption{The Kantorovich optimal transportation plan $\pi^*$ in the $V$ independent user system for the discriminative pair $S_1 = 0$ and $S_1=1$ is in (a).
There are $V=10$ users.  For each user $i$, $S_i \sim \text{Bernoulli}(p_i)$ and so the counting query $X = \sum_{i \in \N} S_i$ follows Poisson Binomial distribution. }
\label{fig:PoissonBinomial}
\end{figure}

A typical example of separable query function is the counting query, where $X = f(S) = \sum_{i \in \N} S_i$.
In Figure~\ref{fig:Binomial}, we show the Kantorovich optimal transport plan $\pi^*$ for the counting query in a $25$ independent user system, where each $S_i \in \Set{0,1}$ follows $\text{Bernoulli}(0.7)$ distribution.
For discriminative secret pair $(S_i=0,S_i=1)$,
$\supp(\pi^*) = \big\{ (x,x') \colon x-x' = 1 \big\}$;
for discriminative pair $(S_i=0,S_i=\perp_i)$, $\supp(\pi^*) = \big\{ (x,x') \colon x-x' \leq 1 \big\}$.
In this case, adding noise $N_{\theta}$ with $\theta = \eta^{-1}  \big(  \epsilon / d(1) \big)$ attains $(\epsilon, \SetPair_i \cup \SetPair_{\perp_i})$-pufferfish privacy.

This example is equivalent to the single-prior privacy for answering counting query $f$ in \citet[Section~7.1.1]{Pufferfish2014Kifer}, which can be extended to the pufferfish private histogram publishing \citep[see][Section~7.1.2]{Pufferfish2014Kifer}.
The name `single prior' refers to the same probability distributions of $S_i$ for all $i$.
\citealp[Algorithm~1]{Pufferfish2014Kifer} proposes a segmented randomization scheme, in which the noise distribution in each segment needs to be determined by the prior knowledge $\rho$. Whereas, Lemma~\ref{lemma:NindUser} and Theorem~\ref{theo:W1Exponential} work for different $\rho$ and other query functions $f$.
For example, assuming each user determines his/her own coin flipping probability, i.e., $S_i \sim \text{Bernoulli}(\rho_i)$, we still have $\supp(\pi^*) = \big\{ (x,x') \colon x-x' = 1 \big\}$. See the Kantorovich optimal transport plan $\pi^*$ in Figure~\ref{fig:PoissonBinomial}.

See Section~\ref{lemmaInt:NindUser} in the supplementary materials for the full proof of Lemma~\ref{lemma:NindUser}(a) and (b) by Theorem~\ref{theo:W1Exponential} and the detailed derivation of \eqref{eq:KantSuppSep}.

\subsection{Relaxed Sufficient Condition}

The sufficient condition in Theorem~\ref{theo:W1Exponential} is strict in that it enforces the inequality $d(x-x') \leq  \epsilon$ to hold for each $(x,x') \in \supp(\pi^*)$ in \eqref{eq:W1Exponential:SuffCond1Main}.
However, the integral in \eqref{eq:W1Exponential:SuffCond1Main} is a noised expected distance. Knowing that the randomization does not increase statistical differences, Theorem~\ref{theo:W1Exponential} may result in a larger $\theta$ that overkill the data utility.\footnote{In \eqref{eq:W1Exponential:SuffCond1Main}, the distance $e^{\eta(\theta) d(x-x')} $ is averaged w.r.t. the joint probability $\pi^*(x,x')$ and then randomized by $P_{Y|X}(y|x) = P_{N_{\theta}}(y-x)$. Therefore, it is randomized statistical distance. }
We relax this sufficient condition in the following theorem. The full proof is in Section~\ref{app:theo:SuffCondRelax} in the supplementary material.

\begin{theorem}[relaxed sufficient condition]  \label{theo:SuffCondRelax}
    Let $\theta(s_i,s_j)$ be the maximum value of $\theta$ that holds the equalities
     \begin{subequations} \label{eq:SuffCondRelaxMain}
        \begin{align}
            & \int e^{\eta(\theta) d(x-x')} \pi^*(x,x') \dif x = e^{\epsilon} p(x'|s_j), \label{eq:SuffCondRelax1Main} \\
            & \int e^{\eta(\theta) d(x-x')} \pi^*(x,x') \dif x' = e^{\epsilon} p(x|s_i), \label{eq:SuffCondRelax2Main}
        \end{align}
    \end{subequations}
    over all $x$ and $x'$.
    For the exponential mechanism, adding noise $N_{\theta}$ with $\theta = \max_{\rho \in \D, (s_i,s_j) \in \SetPair} \theta(s_i,s_j)$ attains $(\epsilon,\SetPair)$-pufferfish privacy in $Y$.
\end{theorem}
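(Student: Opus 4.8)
The plan is to establish, exactly as in Theorem~\ref{theo:W1Exponential}, the two-sided bound $e^{-\epsilon}\le P_{Y|S}(y|s_i,\rho)/P_{Y|S}(y|s_j,\rho)\le e^{\epsilon}$ for every $y$, every $(s_i,s_j)\in\SetPair$ and every $\rho\in\D$, but to replace the \emph{pointwise} domination $e^{\eta(\theta)d(x-x')}\le e^{\epsilon}$ on $\supp(\pi^*)$ by an \emph{averaged} one. Fix $(s_i,s_j)$ and $\rho$ and let $\pi^*$ be the Kantorovich optimal transport plan coupling $P_{X|S}(\cdot|s_i,\rho)$ and $P_{X|S}(\cdot|s_j,\rho)$; only the fact that $\pi^*$ is a valid coupling is used for correctness, its optimality being what keeps $\theta(s_i,s_j)$ small. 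Starting from the identity already used in \eqref{eq:W1Exponential:SuffCond1Main},
\[
P_{Y|S}(y|s_i,\rho)-e^{\epsilon}P_{Y|S}(y|s_j,\rho)=\int\big(P_{N_{\theta}}(y-x)-e^{\epsilon}P_{N_{\theta}}(y-x')\big)\dif\pi^{*}(x,x'),
\]
I would apply the triangular-inequality/symmetry bound $P_{N_{\theta}}(y-x)\le e^{\eta(\theta)d(x-x')}P_{N_{\theta}}(y-x')$ to obtain
\[
P_{Y|S}(y|s_i,\rho)-e^{\epsilon}P_{Y|S}(y|s_j,\rho)\le\int\big(e^{\eta(\theta)d(x-x')}-e^{\epsilon}\big)P_{N_{\theta}}(y-x')\dif\pi^{*}(x,x').
\]

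The departure from Theorem~\ref{theo:W1Exponential} is that I would not bound the bracket pointwise but instead integrate in $x$ first (Fubini/Tonelli). Writing $\pi^{*}(x,x')=\pi^{*}(x\mid x')\,p(x'|s_j)$, the inner integral $\int\big(e^{\eta(\theta)d(x-x')}-e^{\epsilon}\big)\pi^{*}(x,x')\dif x$ is non-positive for every $x'$ precisely when \eqref{eq:SuffCondRelax1Main} holds in the form ``$\le$''; since $P_{N_{\theta}}(y-x')\ge 0$, the whole integral is then $\le 0$ for every $y$, which gives $P_{Y|S}(y|s_i,\rho)\le e^{\epsilon}P_{Y|S}(y|s_j,\rho)$. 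The reverse ratio is handled by the mirror argument: swap the roles of $s_i$ and $s_j$, use $d(x'-x)=d(x-x')$, integrate in $x'$ first, and reduce to \eqref{eq:SuffCondRelax2Main}. Combining the two directions yields \eqref{eq:DefPufferfish} for this $(s_i,s_j)$ and $\rho$.

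It remains to justify the quantifier $\theta=\max_{\rho\in\D,(s_i,s_j)\in\SetPair}\theta(s_i,s_j)$. Here I would use that $\eta\propto 1/\theta$ and $d\ge 0$ make $\theta\mapsto e^{\eta(\theta)d(x-x')}$ non-increasing, hence $\theta\mapsto\int e^{\eta(\theta)d(x-x')}\pi^{*}(x,x')\dif x$ is non-increasing, tends to $p(x'|s_j)\le e^{\epsilon}p(x'|s_j)$ as $\theta\to\infty$, and diverges as $\theta\to 0^{+}$ (whenever $\pi^*$ puts mass on $\{x:d(x-x')>0\}$). Thus for each $x'$ there is a threshold value of $\theta$ at which \eqref{eq:SuffCondRelax1Main} holds with equality and above which it holds in ``$\le$'' form, and similarly for \eqref{eq:SuffCondRelax2Main} in $x$; $\theta(s_i,s_j)$ is exactly the largest of these thresholds, so at $\theta(s_i,s_j)$ all the inner integrals above are $\le 0$, and by the same monotonicity they remain $\le 0$ at any larger $\theta$. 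Consequently taking the maximum over all pairs and all $\rho$ preserves the bound for every pair and every $\rho$ simultaneously, i.e.\ $(\epsilon,\SetPair)$-pufferfish privacy. I would also add the short remark that the strict condition of Theorem~\ref{theo:W1Exponential} forces $e^{\eta(\theta)d(x-x')}\le e^{\epsilon}$ pointwise on $\supp(\pi^{*})$, which implies \eqref{eq:SuffCondRelax1Main}--\eqref{eq:SuffCondRelax2Main} in ``$\le$'' form, so $\theta(s_i,s_j)$ never exceeds the $\theta$ of Theorem~\ref{theo:W1Exponential} and the result is genuinely a relaxation.

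The main obstacle I anticipate is not the inequality chain, which is short, but the bookkeeping around \eqref{eq:SuffCondRelaxMain}: making precise in what sense ``the maximum value of $\theta$ that holds the equalities'' is the operative quantity — it is the threshold above which the \emph{averaged} inequalities hold for all $x,x'$, recovered from the per-$x'$ (resp.\ per-$x$) equalities via the monotonicity above — and discharging the Fubini/integrability step so that ``integrate in $x$ first'' is legitimate for every $y$. One should also note that $\pi^{*}(\cdot\mid x')$ need only be defined where $p(x'|s_j)>0$; on the set $p(x'|s_j)=0$ both sides of \eqref{eq:SuffCondRelax1Main} vanish and contribute nothing to the $y$-integral, and symmetrically for \eqref{eq:SuffCondRelax2Main}.
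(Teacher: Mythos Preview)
Your proposal is correct and follows essentially the same route as the paper: apply the exponential-mechanism bound to reach $\int P_{N_\theta}(y-x')\big(e^{\eta(\theta)d(x-x')}-e^{\epsilon}\big)\dif\pi^{*}(x,x')$, integrate in $x$ first via Fubini, require the inner integral to be nonpositive for every $x'$ (and symmetrically in $x$), and use the monotonicity of $\eta$ in $\theta$ to identify the equality threshold. The paper's appendix takes a small detour through an auxiliary convex majorant $D_\epsilon(z;\theta)$ to justify that $\pi^{*}$ is the right coupling, whereas you simply note that any coupling suffices for correctness and optimality only serves to keep $\theta(s_i,s_j)$ small; this is a harmless simplification of the same argument.
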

\begin{proof}
Rewrite \eqref{eq:W1Exponential:SuffCond1Main} as
\begin{align*}
		&\int P_{N_{\theta}}(y-x') \big( e^{\eta(\theta) d(x-x')}  - e^{\epsilon} \big)  \dif \pi^*(x,x') \\
        &= \int P_{N_{\theta}}(y-x')  \int \big( e^{\eta(\theta) d(x-x')} - e^{\epsilon}\big) \pi^*(x,x')  \dif x \dif x'.
\end{align*}
Relaxing the condition $e^{\eta(\theta) d(x-x')} \leq e^{\epsilon}, \forall x,x'$ to
\begin{multline} \label{eq:theo:SuffCondRelax}
		\int \big( e^{\eta(\theta) d(x-x')} - e^{\epsilon}\big) \pi^*(x,x')  \dif x  \leq 0 \\
        \Longrightarrow  \int e^{\eta(\theta) d(x-x')} \pi^*(x,x') \dif x \leq e^{\epsilon} p(x'|s_j), \forall x'
\end{multline}
still holds the inequality $\frac{P_{Y|S}(y|s_i,\rho)}{P_{Y|S}(y|s_j,\rho)} \leq e^{\epsilon}$. For $\eta$ nonincreasing in $\theta$, the minimum $\theta$ for the condition~\eqref{eq:theo:SuffCondRelax} is the one that holds~\eqref{eq:theo:SuffCondRelax} as an equality. We have \eqref{eq:SuffCondRelax1Main}. It can be shown in the same way that $\int e^{\eta(\theta) d(x-x')} \pi^*(x,x') \dif x' \leq e^{\epsilon} p(x|s_j), \forall x$ is a relaxed sufficient condition for $\frac{P_{Y|S}(y|s_j,\rho)}{P_{Y|S}(y|s_i,\rho)} \leq e^{\epsilon}$ and therefore \eqref{eq:SuffCondRelax2Main}. Maximize $\theta$ that holds \eqref{eq:SuffCondRelax1Main} and \eqref{eq:SuffCondRelax2Main} over $x'$ and $x$, respectively, and over all $(s_i,s_j) \in \SetPair$, we have Theorem~\ref{theo:SuffCondRelax}.
\end{proof}

The relaxation in Theorem~\ref{theo:SuffCondRelax} produces an $(\epsilon,\SetPair)$-pufferfish privacy achieving $\theta$ that is smaller than the one in Theorem~\ref{theo:W1Exponential}.\footnote{For two functions $f_1(\theta)$ and $f_2(\theta)$ both nonincreasing in $\theta$, $f_1^{-1}(a) \leq f_2^{-1}(a), \forall a$. }
Though for continuous $X$ solving the integral equations in~\eqref{eq:SuffCondRelaxMain} could be complex, it is convenient to apply Theorem~\ref{theo:SuffCondRelax} to the integer-valued metric $d$, where \eqref{eq:SuffCondRelaxMain} reduces to
\begin{subequations} \label{eq:SuffCondRelaxpPoly}
\begin{align}
    & \sum_{x,x'} e^{\eta(\theta) d(x-x')} \pi^*(x,x') = e^{\epsilon} p(x'|s_j),\quad  \forall x'\\
    & \sum_{x,x'} e^{\eta(\theta) d(x-x')} \pi^*(x,x') = e^{\epsilon} p(x|s_i), \quad \forall x
\end{align}
\end{subequations}
and $\theta(s_i,s_j)$ for each $(s_i,s_j) \in \SetPair$ can be determined by solving the polynomial equations above.

\begin{figure*}[ht]
\centerline{
\subfigure[]{\scalebox{0.6}{
%
%
\definecolor{mycolor1}{rgb}{0.20810,0.16630,0.52920}%
\definecolor{mycolor2}{rgb}{0.97630,0.98310,0.05380}%
\begin{tikzpicture}

\begin{axis}[%
width=4.3in,
height=2.5in,
scale only axis,
bar shift auto,
xmin=0,
xmax=15,
xtick={1,2,3,4,5,6,7,8,9,10,11,12,13,14},
xticklabels={{1: Bachelors},{2: Some-college},{3: 11th},{4: HS-grad},{5: Prof-school},{6: Assoc-acdm},{7: Assoc-voc},{8: 9th},{9: 7th-8th},{10: 12th},{11: Masters},{12: 1st-4th},{13: 10th},{14: Doctorate},{15: 5th-6th},{16: Preschool}},
xticklabel style={rotate=90},
ymin=0,
ymax=0.2,
ylabel={\Large conditional probability $P_{X|S}(\cdot|s,\rho)$},
grid=major,
legend style={at={(0.97,0.95)},draw=darkgray!60!black,fill=white,legend cell align=left}
]
\addplot[ybar, bar width=0.229, fill=mycolor1, draw=black, area legend] table[row sep=crcr] {%
1	0.0289750871769062\\
2	0.132796491354208\\
3	0.0906999316964446\\
4	0.116367688823381\\
5	0.131070927849876\\
6	0.131250674048244\\
7	0.0407664377898408\\
8	0.0579142251141388\\
9	0.110867455153324\\
10	0.0328935543013265\\
11	0.04889096595607\\
12	0.00409821332278822\\
13	0.0186576553905885\\
14	0.0547506920228637\\
};
\addlegendentry{S = `White'}

\addplot[ybar, bar width=0.229, fill=mycolor2, draw=black, area legend] table[row sep=crcr] {%
1	0.0423484119345525\\
2	0.0856592877767084\\
3	0.123195380173244\\
4	0.103946102021174\\
5	0.129932627526468\\
6	0.179018286814244\\
7	0.0221366698748797\\
8	0.0567853705486044\\
9	0.133782483156882\\
10	0.0153994225216554\\
11	0.026948989412897\\
12	0.00384985563041386\\
13	0.014436958614052\\
14	0.0625601539942252\\
};
\addlegendentry{S = `Asian-Pac-Islander'}

\end{axis}
\end{tikzpicture}
\subfigure[]{\includegraphics[scale=0.65]{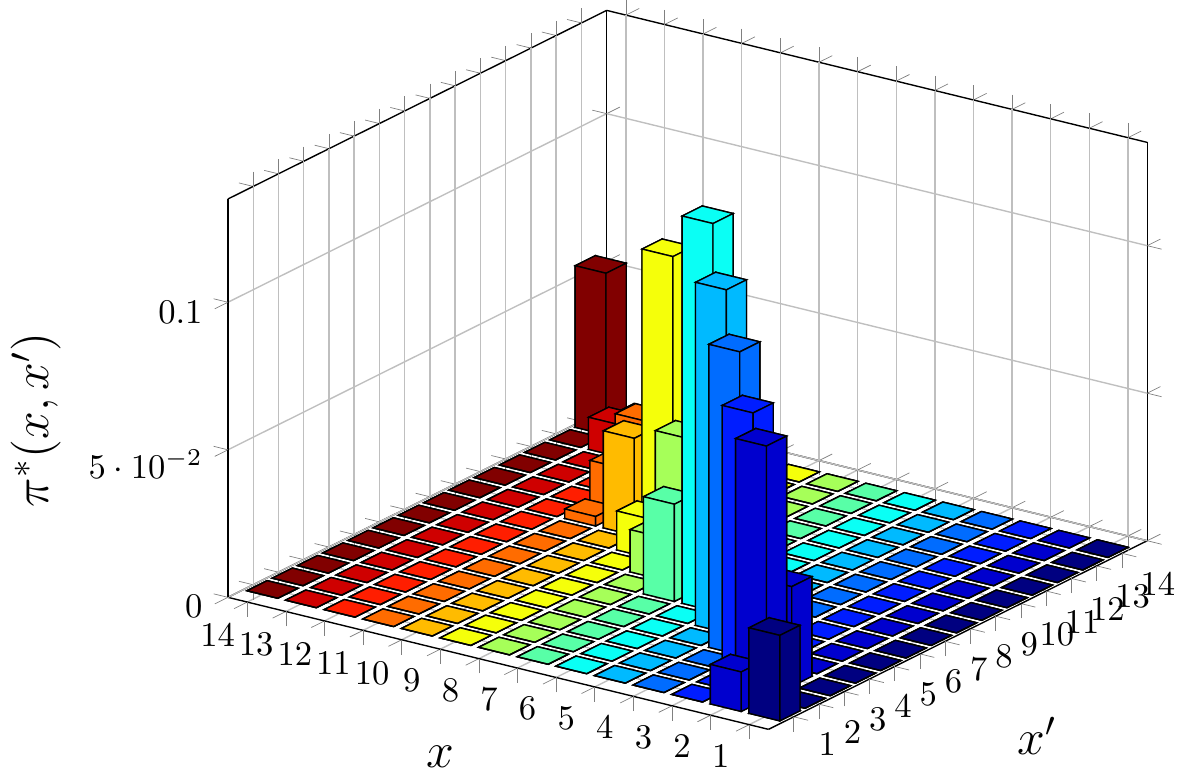}}
}
\caption{For $S$ and $X$ being the attributes `race' and `education', respectively, in the adult database \citep{UCI2007}, (a) shows the distribution of `education' conditioned on two events: the `race' is `White' and the `race' is `Asian-Pac-Islander'. (b) shows the corresponding Kantorovich optimal transport plan $\pi^*$. } \label{fig:AdultPiStar}
\end{figure*}

\section{EXPERIMENT}
\label{sec:exp}
In the UCI machine learning repository \citep{UCI2007}, the adult dataset was extracted from the census bureau database to predict whether the participant's income exceeds 50K/yr. It contains $32652$ records and $15$ attributes.
In this experiment, we use 2 attributes/columns. Let $S$ and $X$ denote the columns `race' and `education', respectively.
That is, we want to publish the column `education' while protecting the privacy of `race' for all participants.

We assign a numeric index to $X$: value $1$ denotes `Bachelors', value $2$ denotes `Some-college' and so on. See horizontal axis in Figure~\ref{fig:AdultPiStar}(a) for the alphabet $\X$ containing all possible values of `education'.
Consider the events $S = \text{`White'}$ and $S = \text{`Asian-Pac-Islander'}$. The probability of `education' $X$ conditioned on each event is plotted in Figure~\ref{fig:AdultPiStar}(a).
The support of both $P_{X|S}(\cdot|\text{`White'},\rho)$ and $P_{X|S}(\cdot|\text{`Asian-Pac-Islander'},\rho)$ is $\X = \Set{1,\dotsc,14}$.
In Figure~\ref{fig:AdultPiStar}(a), we show the corresponding Kantorovich optimal transport plan $\pi^*$.
Take the Laplace noise for example. We have
$\ell_1$-norm $d(z) = |z|$ and
\begin{align*}
     &\triangle_{\X} = \max_{x,x' \in \X} |x-x'| = 14, \\
     &\max_{(x,x') \in \supp(\pi^*)} |x-x'| = 2.
\end{align*}
The noise power is much reduced if we calibrating $\theta$ to the sensitivity of $\supp(\pi^*)$ rather than $\triangle_{\X}$.

\begin{figure}[t]
\centerline{\scalebox{0.6}{
%
%
\begin{tikzpicture}

\begin{axis}[%
width=4.5in,
height=2.5in,
scale only axis,
xmin=0.8,
xmax=5.8,
xlabel={\Large privacy budget $\epsilon$},
ymin=0,
ymax=12.8,
ylabel={\Large utility loss $\VAR[N_{\theta}]$},
grid=major,
legend style={at={(1.08,0.93)},draw=darkgray!60!black,fill=white,legend cell align=left}
]

\addplot [
color=blue,
line width = 1.5pt]
table[row sep=crcr]{%
0.8	12.5\\
1.3	4.73372781065089\\
1.8	2.46913580246914\\
2.3	1.51228733459357\\
2.8	1.02040816326531\\
3.3	0.734618916437098\\
3.8	0.554016620498615\\
4.3	0.432666306111412\\
4.8	0.347222222222222\\
5.3	0.284798860804557\\
5.8	0.237812128418549\\
};
\addlegendentry{\large Laplace noise by sufficient condition Theorem~\ref{theo:W1Exponential}};

\addplot [
color=red,
dashed,
line width = 1.5pt]
  table[row sep=crcr]{%
0.8	3.125\\
1.3	1.18343195266272\\
1.8	0.617283950617284\\
2.3	0.397579269785382\\
2.8	0.305394110969956\\
3.3	0.244884060038036\\
3.8	0.202304351924927\\
4.3	0.170824401238967\\
4.8	0.146680137698887\\
5.3	0.127631329335633\\
5.8	0.112262755234664\\
};
\addlegendentry{\large Laplace noise by relaxed sufficient condition Theorem~\ref{theo:SuffCondRelax}};

%

\end{axis}
\end{tikzpicture}
\caption{The variance of Laplace noise then Theorem~\ref{theo:W1Exponential} and Theorem~\ref{theo:SuffCondRelax} are applied to the optimal transportation plan $\pi^*$ in Figure~\ref{fig:AdultPiStar}(b). } \label{fig:AdultRelaxCompare}
\end{figure}
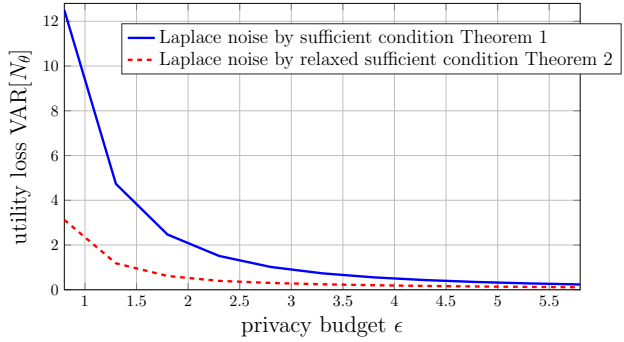

Increasing the privacy budget $\epsilon$ from $0.8$, we apply the sufficient condition in Theorem~\ref{theo:W1Exponential} and the relaxed sufficient condition in Theorem~\ref{theo:SuffCondRelax} to obtain $\theta$ for Laplace noise.
Figure~\ref{fig:AdultRelaxCompare} shows the noise variance as a function of $\epsilon$.
Here, the value of $\theta$ in Theorem~\ref{theo:SuffCondRelax} is obtained by solving the polynomial equations in \eqref{eq:SuffCondRelaxpPoly}.
It can be seen that the resulting noise power by the relaxed sufficient condition in Theorem~\ref{theo:SuffCondRelax} is always less than Theorem~\ref{theo:W1Exponential}.


\section{$\delta$-Approximation by Gaussian}

\citet{RDP2017} pointed out two reasons that Gaussian noise is preferred over Laplace noise: the noise variance proportional to the $\ell_2$-norm is no larger than $\ell_1$-norm; the tail probability of Gaussian distribution decays faster than Laplace distribution.
This section considers zero-mean Gaussian noise $N_{\theta}$: $P_{N_{\theta}}(z) = \frac{1}{\sqrt{2\pi} \theta} e^{-\frac{z^2}{2\theta^2}}$.

It is shown in \citet[Theorem~3.22]{Dwork2014book} that for $\delta \in (0,1)$, Gaussian mechanism attains $\delta$-approximation of differential privacy: $\left| \log \frac{P_{Y|S}(y|s_i)}{P_{Y|S}(y|s_j)} \right| \leq \epsilon$ for all neighboring $s_i$ and $s_j$ with probability at least $1-\delta$.
Following the definition of $(\epsilon,\delta)$-differential privacy, we say that $Y$ is a $\delta$-approximation of $(\epsilon,\SetPair)$-pufferfish privacy if
$$ \log \frac{P_{Y|S}(y|s_i, \rho )-\delta}{P_{Y|S}(y|s_j, \rho)}  \leq \epsilon \text{, } \log \frac{P_{Y|S}(y|s_j, \rho )-\delta}{P_{Y|S}(y|s_i, \rho)}  \leq \epsilon$$
for all $(s_i,s_j) \in \SetPair$ and $\rho \in \D$.
The theorem below states that to achieve this approximation, it suffices to calibrate the standard deviation $\theta$ of Gaussian noise to the sensitivity of the Kantorovich optimal transport plan.

\begin{theorem} \label{theo:GaussN_AISTATS}
    For $\delta \in (0,1)$ and $N_{\theta}$ being zero-mean Gaussian noise, $Y$ attains $\delta$-approximation of $\epsilon$-pufferfish privacy
    \begin{enumerate}[(a)]
        \item if $\theta \geq \frac{\sqrt{2 \log (1.25 /\delta)}}{\epsilon} \triangle$ for all $\epsilon \leq 1$;
        \item if $\theta = \frac{\triangle}{\epsilon} c$ for $c > 0.41 \delta^{-\frac{1}{3}} + \sqrt{ (0.41\delta^{-\frac{1}{3}})^2 + \frac{\epsilon}{2} }$,
    \end{enumerate}
    where $\triangle = \sup_{(x,x') \in \supp(\pi^*)} |x-x'|$ is the $\ell_1$-sensitivity of the Kantorovich optimal transport plan $\pi^*$.
\end{theorem}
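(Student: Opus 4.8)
The plan is to reduce the pufferfish statement to the classical Gaussian mechanism bound applied not to the query~$f$ itself but to the Kantorovich optimal transport plan~$\pi^*$, exactly in the spirit of the reduction used to prove Theorem~\ref{theo:W1Exponential}. First I would recall the coupling identity: for any $(s_i,s_j) \in \SetPair$ and any $\rho$, since $\pi^*$ has marginals $P_{X|S}(\cdot|s_i,\rho)$ and $P_{X|S}(\cdot|s_j,\rho)$, the convolution~\eqref{eq:Convolve} gives
\begin{equation*}
    P_{Y|S}(y|s_i,\rho) = \int P_{N_{\theta}}(y-x)\,\dif\pi^*(x,x'), \qquad
    P_{Y|S}(y|s_j,\rho) = \int P_{N_{\theta}}(y-x')\,\dif\pi^*(x,x').
\end{equation*}
So it suffices to control, for each fixed $y$, the pointwise ratio of the two integrands $P_{N_{\theta}}(y-x)$ and $P_{N_{\theta}}(y-x')$ over the support of $\pi^*$. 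On $\supp(\pi^*)$ we have $|x-x'| \le \triangle$ by definition of $\triangle$, so this is precisely the situation of a one-dimensional Gaussian mechanism with $\ell_1$-sensitivity $\triangle$.

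Next I would invoke the standard Gaussian-mechanism privacy-loss analysis (as in \citet[Theorem~3.22, Appendix~A]{Dwork2014book}): for zero-mean Gaussian noise with standard deviation $\theta$ and a shift of magnitude at most $\triangle$, the privacy loss random variable $\log\frac{P_{N_{\theta}}(y-x)}{P_{N_{\theta}}(y-x')}$ exceeds $\epsilon$ only on an event of $P_{N_{\theta}}$-probability at most $\delta$, provided $\theta \ge \frac{\sqrt{2\log(1.25/\delta)}}{\epsilon}\triangle$ and $\epsilon \le 1$. Integrating this tail bound against $\pi^*$ and using the marginal identities above yields
\begin{equation*}
    P_{Y|S}(y|s_i,\rho) \le e^{\epsilon} P_{Y|S}(y|s_j,\rho) + \delta,
\end{equation*}
which, together with the symmetric statement obtained by swapping the roles of $x$ and $x'$ (and using symmetry of $d(z)=|z|$), is exactly the $\delta$-approximation condition. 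Taking the maximum over $(s_i,s_j)\in\SetPair$ and $\rho\in\D$ in the calibration of $\theta$ — i.e. using $\triangle = \max_{\rho,(s_i,s_j)} \sup_{(x,x')\in\supp(\pi^*)}|x-x'|$ — gives part~(a). For part~(b) I would replace the crude Gaussian tail bound by the sharper one: one bounds the probability-loss event using $\Pr[|Z| > t]$ for standard normal $Z$ via the Chernoff/Gaussian tail estimate, sets the resulting expression equal to $\delta$, and solves the induced quadratic inequality in $c = \theta\epsilon/\triangle$. The threshold $c > 0.41\delta^{-1/3} + \sqrt{(0.41\delta^{-1/3})^2 + \epsilon/2}$ is the positive root of that quadratic, where the constant $0.41$ comes from the numerical bound on the relevant Gaussian tail ratio; this part removes the $\epsilon \le 1$ restriction.

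The main obstacle is not the reduction to $\pi^*$ — that is the same coupling trick already used for Theorem~\ref{theo:W1Exponential} and is routine once one observes $\supp(\pi^*)$ controls $|x-x'|$ — but rather getting the constants in part~(b) right. Specifically, the delicate step is the careful Gaussian tail analysis: one must track the privacy loss $\frac{|x-x'|}{\theta^2}\bigl(y-x' - \tfrac{|x-x'|}{2}\bigr)$ (a normal random variable under $P_{N_{\theta}}(y-x')$), bound $\Pr[\text{privacy loss} > \epsilon] \le \delta$ using a tail inequality that is tight enough to yield the cube-root dependence $\delta^{-1/3}$, and then verify that the stated $c$ indeed solves the inequality. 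I would handle the algebra of solving the quadratic in the supplement; here I would only state that the displayed threshold is its positive root and that monotonicity in $\theta$ (together with $\eta$-type reasoning, since larger $\theta$ only helps) completes the argument.
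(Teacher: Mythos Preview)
Your proposal is essentially the paper's argument: reduce via the coupling $\pi^*$ so that the shift $|x-x'|$ is bounded by $\triangle$ on $\supp(\pi^*)$, then run the Gaussian-mechanism tail analysis of \citet[Appendix~A]{Dwork2014book} with $\triangle$ in place of the query sensitivity, and for~(b) solve the resulting quadratic in $c=\theta\epsilon/\triangle$.

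One correction on part~(b): it is not that a \emph{sharper} tail bound replaces the one used in~(a). The paper uses the \emph{same} Mills-ratio bound $\Pr(Z>t)<\frac{1}{\sqrt{2\pi}\,t}e^{-t^2/2}$ in both parts; what changes is the algebra downstream. In~(a) one forces $\log t>0$ by taking $c\geq 3/2$ (this is where $\epsilon\leq 1$ is spent) and then solves $t^2/2>\log(\sqrt{2/\pi}/\delta)$. In~(b) one instead applies $\tfrac{t^2}{2}\geq \log\tfrac{t^2}{2}+1$ to the same inequality $\log t+\tfrac{t^2}{2}>\log(\sqrt{2/\pi}/\delta)$, obtaining $t^3>\text{const}\cdot\delta^{-1}$ and hence $t>0.84\,\delta^{-1/3}$; the constant $0.41$ is then $\tfrac{1}{2}\cdot 0.84\approx\tfrac{1}{2}(2/e)^{1/3}(2/\pi)^{1/6}$, not a separate tail-ratio estimate. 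Substituting $t=c-\epsilon/(2c)$ gives the stated quadratic, with no constraint on $\epsilon$.
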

\begin{proof}
Let $\theta = \frac{\triangle}{\epsilon} c$.
To have
\begin{align*}
    & P_{Y|S}(y|s_i,\rho) - e^\epsilon P_{Y|S}(y|s_j,\rho) \\
    &= \int \frac{1}{\sqrt{2\pi} \theta} \left( e^{-\frac{(y-x)^2}{2\theta^2}}  - e^{\epsilon-\frac{(y-x')^2}{2\theta^2}} \right)  \dif \pi^*(x,x') \leq 0,
\end{align*}
it suffice to make $\frac{(y-x')^2-(y-x)^2}{2\theta^2} \leq \frac{2\triangle|y-x| + \triangle^2}{2\theta^2} \leq \epsilon \Longrightarrow \frac{|y-x|}{\theta} \leq c - \frac{\epsilon}{2c}$, where $\frac{|Y-X|}{\theta} = \frac{N_{\theta}}{\theta}$ is standard normally distributed.
To have $\Pr(\frac{|y-x|}{\theta} \leq t) \geq 1-\delta$, we use the Gaussian tail bound $\Pr(\frac{y-x}{\theta} > t) < \frac{1}{\sqrt{2 \pi } t} e^{-\frac{t^2}{2}} < \frac{\delta}{2}$, which can be written as
\begin{equation} \label{eq:theo:GaussN_AISTATS}
    \log t + \frac{t^2}{2} > \log \sqrt{\frac{2}{\pi}} \frac{1}{\delta}.
\end{equation}
Substituting $t = c - \frac{\epsilon}{2c}$, we have $c^2 >2 \log (1.25 / \delta)$ and (a).
To prove (b), we relieve the constraint $\epsilon \leq 1$ and apply the inequality $t-1 \geq \log t, \forall t > 0$ to \eqref{eq:theo:GaussN_AISTATS}, i.e., request
$\log t + \log \frac{t^2}{2} + 1 \geq \log t + \log \frac{t^2}{2}  > \log \sqrt{\frac{2}{\pi}} \frac{1}{\delta}$.
We have $t = c - \frac{\epsilon}{2c} > \left( \frac{2}{e} \right)^{\frac{1}{3}} \left( \frac{2}{\pi}\right)^{\frac{1}{6}} \delta^{-\frac{1}{3}}$, where $\left( \frac{2}{e} \right)^{\frac{1}{3}} \left( \frac{2}{\pi}\right)^{\frac{1}{6}} = 0.8373$.
Solving the quadratic inequality $c - \frac{\epsilon}{2c} > 0.84 \delta^{-\frac{1}{3}}$, we get $c > 0.41 \delta^{-\frac{1}{3}} + \sqrt{(0.41 \delta^{-\frac{1}{3}})^2 + \frac{\epsilon}{2}}$.
See Section~\ref{app:theo:GaussN_AISTATS} in the supplementary material for the full proof of Theorem~\ref{theo:GaussN_AISTATS}.
\end{proof}

The proof of Theorem~\ref{theo:GaussN_AISTATS}(a) is similar to \citet[Appendix~A]{Dwork2014book} for $(\epsilon,\delta)$-differential privacy, except that the noise should be calibrated to the sensitivity of the Kantorovich optimal transport plan $\pi^*$, instead of the query function $f$.

\begin{remark} \label{rem:parallel}

Lemma~\ref{lemma:Winfty2W1Laplace}, Theorem~\ref{theo:W1Exponential} and Theorem~\ref{theo:GaussN_AISTATS}(a) parallel the well-known results on Laplace, exponential and Gaussian mechanisms for differential privacy: replacing the sensitivity of the query function $f$ in Theorems~3.6, 3.10 and 3.22 in \cite{Dwork2014book} by the maximum pairwise distance in the Kantorovich optimal transport plan $\pi^*$ over all $(s_i,s_j) \in \SetPair$ and $\rho \in \D$, the additive noise mechanism attains pufferfish privacy.

\end{remark}

\section{CONCLUSION}

We studied the problem of how to attain pufferfish privacy, the $\epsilon$-indistinguishability when the secret $S$ is correlated with the public data $X$, by adding independent noise $N$ to $X$.
We proved that calibrating noise to the maximum pairwise distance over the support of the Kantorovich optimal transport plan $\pi^*$ attains pufferfish privacy.
Unlike the difficulty in determining the optimal transport plan in the existing $\infty$-Wasserstein mechanism, $\pi^*$ is directly obtained by the conditional probabilities $P_{X|S}(\cdot|s_i,\rho)$ and $P_{X|S}(\cdot|s_j,\rho)$ for every secret pair $(s_i,s_j)$.
We also derived a relaxed sufficient condition and showed that it enhances data utility for integer-valued $X$.

This paper in fact proposes a method for attaining the pufferfish privacy based on the  mass transport cost upper bound $C(x,x';\theta)$: for any noise distribution $P_{N_{\theta}}(\cdot)$ such that $P_{N_{\theta}}(y-x) \leq C(x,x';\theta)  P_{N_{\theta}}(y-x'), \forall x,x',y$, pufferfish privacy attains if $\sup_{(x,x') \in \supp(\pi^*)} C(x,x';\theta) \leq e^{\epsilon}$, for all $(s_i,s_j) \in \pi^*$ and $\rho \in \D$. See \eqref{eq:W1Exponential:SuffCond1Main}.
Here, $C(x,x';\theta)$ does not need to be an exponential function. Therefore, it is worth exploring noise distributions other than the exponential mechanism.
For $P_{X|S}(\cdot|s,\rho)$ being Gaussian distribution or Gaussian mixture model for all $s$, the Kantorovich optimal transport plan $\pi^*$ is fully characterized by the mean and covariance matrix~\citep{Takatsu2010WGauss,Delon2020WGauss}.
Since Gaussian models are widely used in machine learning, it is of interest whether the Kantorovich mechanism can be apply to the privacy-preserving pattern recognition problems.

\subsubsection*{Acknowledgements}
The author would like to thank A/Prof Olya Ohrimenko for helping her initiate the study on pufferfish privacy and Prof Ben Rubinstein for his useful advice on the dissemination of the research results in this paper.

\bibliographystyle{apalike}
\bibliography{BIB}


\clearpage
\appendix

\thispagestyle{empty}

\onecolumn \makesupplementtitle

\section{$V$ INDEPENDENT USER SYSTEM}
\label{appA:lemma:NindUser}

In the $V$ independent user system, Let $\rho$ be the prior knowledge about the probability distribution  $P_{S_i}(\cdot)$ for all $i \in \N$.
Since $S_i$'s are independent random variables, $P_S(s) = \Pi_{i \in \N} P_{S_i}(s_i)$ and $P_{S|S_i}(s|s_i) = P_{S_{-i}}(s_{-i}), \forall i \in \N$, where $S_{-i} = (S_j \colon j \in \N \setminus \Set{i})$ denote the multiple random variable excluding dimension $i$.
We also have the conditional probabilities
\begin{align*}
    P_{X|S_i}(x |a, \rho) &=  \Pr(f(S)=x | S_i = a) \\
    & = \int_{\SA_{-i}(x,a)} P_{S|S_i}(s|a,\rho) \dif s_{-i} \\
    & = \int_{\SA_{-i}(x,a)}  P_{S_{-i}} (s_{-i}) \dif s_{-i}\\
    & = \Pr(f(S_i = a, S_{-i}) = x), \\
     P_{X|S_i}(x | \perp_i, \rho) &= \Pr(f(S)=x) \\
     &= \int_{\SA(x)} P_{S|S_i}(s|\perp_i) \dif s \\
     &= \int_{\SA(x)}  P_{S} (s) \dif s \\
     &= \int \Big( \int_{\SA_{-i}(x,s_i)} P_{S_{-i}}(s_{-i})  \dif s_{-i} \Big) P_{S_i}(s_i) \dif s_i \\
     &= \int P_{X|S_i}(x | s_i,\rho ) P_{S_i}(s_i) \dif s_i,
\end{align*}
where $\SA(x) = \Set{s \colon x = f(s)}$ and $\SA_{-i}(x,a) = \Set{s_{-i} \colon x = f(s_i = a, s_{-i})}$. The last equality above means $\Pr(f(S)=x) = \E_{S_i} [\Pr(f(S)=x | S_i)]$.

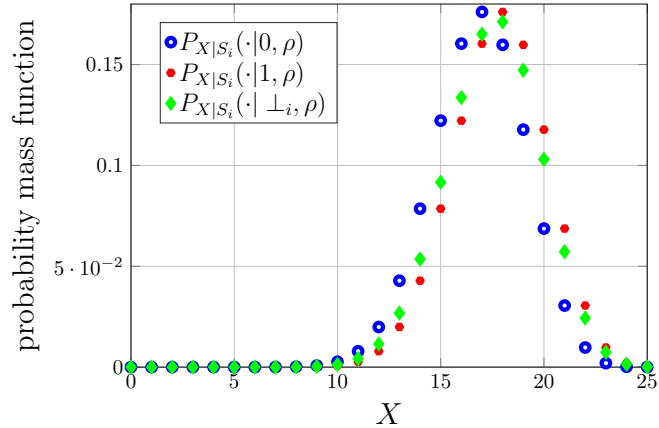
\begin{figure}[h]
\centerline{\scalebox{0.8}{
%
%
\begin{tikzpicture}

\begin{axis}[%
width=4in,
height=3in,
xmin=0,
xmax=25,
xlabel={\Large $X$},
ymin=0,
ymax=0.18,
ylabel={\Large probability mass function},
grid=major,
legend style={at={(0.4,0.95)},draw=darkgray!60!black,fill=white,legend cell align=left}
]
\addplot [
color=blue,
only marks,
line width = 2pt,
mark=o,
mark options={solid},
]
table[row sep=crcr]{
0 2.82429536481001e-13\\
1 1.58160540429361e-11\\
2 4.24397450152117e-10\\
3 7.26191192482512e-09\\
4 8.89584210791077e-08\\
5 8.30278596738338e-07\\
6 6.13483629812216e-06\\
7 3.6809017788733e-05\\
8 0.000182511379869134\\
9 0.000757084242420112\\
10 0.00264979484847039\\
11 0.00786908773182116\\
12 0.0198913050998813\\
13 0.0428428109843597\\
14 0.078545153471326\\
15 0.122181349844285\\
16 0.160363021670624\\
17 0.176084886540293\\
18 0.159780730379155\\
19 0.117733169753061\\
20 0.0686776823559524\\
21 0.0305234143804233\\
22 0.00971199548468013\\
23 0.00197054980848582\\
24 0.000191581231380566\\
25 0\\
};
\addlegendentry{\large $P_{X|S_i}(\cdot|0,\rho)$};

\addplot [
color=red,
only marks,
line width = 2pt,
mark=asterisk,
mark options={solid},
]
table[row sep=crcr]{
0 0\\
1 2.82429536481001e-13\\
2 1.58160540429361e-11\\
3 4.24397450152117e-10\\
4 7.26191192482512e-09\\
5 8.89584210791077e-08\\
6 8.30278596738338e-07\\
7 6.13483629812216e-06\\
8 3.6809017788733e-05\\
9 0.000182511379869134\\
10 0.000757084242420112\\
11 0.00264979484847039\\
12 0.00786908773182116\\
13 0.0198913050998813\\
14 0.0428428109843597\\
15 0.078545153471326\\
16 0.122181349844285\\
17 0.160363021670624\\
18 0.176084886540293\\
19 0.159780730379155\\
20 0.117733169753061\\
21 0.0686776823559524\\
22 0.0305234143804233\\
23 0.00971199548468013\\
24 0.00197054980848582\\
25 0.000191581231380566\\
};
\addlegendentry{\large $P_{X|S_i}(\cdot|1,\rho)$};

\addplot [
color=green,
only marks,
line width = 2pt,
mark=diamond,
mark options={solid},
]
table[row sep=crcr]{
0 8.47288609443003e-14\\
1 4.94251688841752e-12\\
2 1.3839047287569e-10\\
3 2.47565179255402e-09\\
4 3.17708646711099e-08\\
5 3.11354473776877e-07\\
6 2.42164590715349e-06\\
7 1.53370907453054e-05\\
8 8.05197264128534e-05\\
9 0.000354883238634428\\
10 0.0013248974242352\\
11 0.00421558271347562\\
12 0.0114757529422392\\
13 0.0267767568652248\\
14 0.0535535137304496\\
15 0.0916360123832137\\
16 0.133635851392187\\
17 0.165079581131525\\
18 0.171193639691951\\
19 0.147166462191327\\
20 0.103016523533929\\
21 0.0572314019632936\\
22 0.0242799887117003\\
23 0.00738956178182184\\
24 0.00143685923535425\\
25 0.000134106861966396\\
};
\addlegendentry{\large $P_{X|S_i}(\cdot|\perp_i,\rho)$};

\end{axis}
\end{tikzpicture}
\caption{The conditional probability $P_{X|S_i}(x|s_i,\rho)$ for the events $S_i = 0$, $S_i = 1$ and $S_i = \perp_i$ in the $V$ independent user system. The number of users is $V=25$. Each $S_i \sim \text{Bernoulli}(0.7)$. The function $f$ is a counting query $X = f(S) = \sum_{i \in \N} S_i$ and $X \sim \text{Binomial}(25,0.7)$.} \label{fig:PMF}
\end{figure}

In Figure~\ref{fig:PMF}, we show the conditional probability mass function $P_{X|S_i}(x |\cdot, \rho)$ in an $V=25$ independent user system, where $\SA_i \in \Set{0,1,\perp_i}$. Each dimension $S_i$ in the multiple random variable $S = (S_i \colon i \in \N)$ follows $\text{Bernoulli}(p)$ distribution, where $ p= 0.7$. 
%
The corresponding Kantorovich optimal transport plan $\pi^*$ is shown in Figure~1 in the main submission.

For $Y = X + N$, where $N$ is independent of $X$,
\begin{align*}
    P_{Y|S_i}(y|a,\rho) &= \int P_{N}(y-x) P_{X|S_i}(x|a,\rho) \dif x \\
    & = \int P_{N}(y - f(s_i=a,s_{-i})) P_{S_{-i}}(s_{-i}) \dif s_{-i}, \\
    P_{Y|S_i}(y|\perp_i,\rho) &= \int P_{N}(y-x) P_{X|S_i}(x|\perp_i,\rho) \dif x \\
    & = \int P_{N}(y - f(s)) P_{S}(s) \dif s \\
    & = \iint P_{N}(y - f(s_i,s_{-i})) P_{S_{-i}}(s_{-i}) P_{S_i}(s_i) \dif s_{-i} \dif s_i\\
    &= \int P_{Y|S_i}(y | s_i,\rho) P_{S_i}(s_i) \dif s_i.
\end{align*}
The last equality above means $P_{Y|S_i}(y|\perp_i,\rho) = \E_{S_i} [P_{Y|S_i}(y|\cdot,\rho)]$.

\section{PROOF of LEMMA~\ref{lemma:Winfty2W1Laplace}}
\label{app:lemma:Winfty2W1Laplace}

Denote $[z]_+= \max\Set{z,0}$ for all $z \in \Real$.
First, for all $\pi \in \Gamma(s_i,s_j)$,
$\inf_{\pi \in \Gamma(s_i,s_j)} \sup_{(x,x') \in \supp(\pi)} \Big( \frac{|x-x'|}{\theta} - \epsilon \Big) = 0 $ is equivalent to
$$\inf_{\pi \in \Gamma(s_i,s_j)} \int \Big[ \frac{|x-x'|}{\theta} - \epsilon \Big]_+ \dif \pi(x,x') = 0.$$
Then, for the $\infty$-Wasserstein mechanism with $\theta = \frac{1}{\epsilon}\max_{\rho \in \D,(s_i,s_j) \in \SetPair}  W_\infty(s_i, s_j)$ proposed by \citet{PufferfishWasserstein2017Song}, we have
\begin{align}
    \theta &= \frac{1}{\epsilon}\max_{\rho \in \D,(s_i,s_j) \in \SetPair}  W_\infty(s_i, s_j) \\
    &= \frac{1}{\epsilon}\max_{\rho \in \D, (s_i,s_j) \in \SetPair}  \inf_{\pi \in \Gamma(s_i,s_j)} \sup_{(x,x') \in \supp(\pi)} |x-x'| \nonumber \\
    &= \max\limits_{\rho \in \D, (s_i,s_j) \in \SetPair} \Big\{ \theta \colon \inf\limits_{\pi \in \Gamma(s_i,s_j)} \sup\limits_{(x,x') \in \supp(\pi)}  \left( \frac{|x-x'|}{\theta} - \epsilon \right) = 0 \Big\}  \nonumber \\
    & = \max\limits_{\rho\in\D,(s_i,s_j) \in \SetPair} \Big\{ \theta \colon \inf\limits_{\pi \in \Gamma(s_i,s_j)} \int \Big[ \frac{|x-x'|}{\theta} - \epsilon \Big]_+ \dif \pi(x,x') = 0 \big\}\label{eq:lemma:Winfty2W1Aux1} \\
    & =\max\limits_{\rho \in \D, (s_i,s_j) \in \SetPair} \Big\{ \theta \colon \int \Big[ \frac{|x-x'|}{\theta} - \epsilon \Big]_+ \dif \pi^*(x,x') = 0  \Big\} \label{eq:lemma:Winfty2W1Aux2}\\
    & = \max\limits_{\rho \in \D, (s_i,s_j) \in \SetPair} \Big\{ \theta \colon  \sup\limits_{(x,x') \in \supp(\pi^*)}  \big( \frac{|x-x'|}{\theta} - \epsilon \big) = 0 \Big\} \nonumber\\
    & =  \frac{1}{\epsilon} \max_{\rho \in \D, (s_i,s_j) \in \SetPair} \sup_{(x,x')\in \supp(\pi^*)} |x-x'| . \nonumber
\end{align}
The infinum in \eqref{eq:lemma:Winfty2W1Aux1} is a Kantorovich optimal transport problem that determines $W_1$ distance.
Here, $\left[ \frac{|\cdot|}{\theta} - \epsilon \right]_+$ is a convex function for all $\theta$ and $\epsilon$.
Therefore, in \eqref{eq:lemma:Winfty2W1Aux2}, we substitute by the Kantorovich optimal transport plan $\pi^*$, i.e., the minimizer of $\inf_{\pi \in \Gamma(s_i,s_j)} \int \big[ \frac{|x-x'|}{\theta} - \epsilon \big]_+ \dif \pi(x,x')$. \qed

\section{EFFICIENT SOLUTION TO TWO PROBLEMS IN \cite{PufferfishWasserstein2017Song} BY LEMMA~\ref{lemma:Winfty2W1Laplace}}
\label{app:lemma:Winfty2W1LaplaceEx}

For the two examples in \citet[Section~3.1]{PufferfishWasserstein2017Song}, we show how to determine $\theta$ by Lemma~1.
Table~\ref{tab:ex1} shows the probabilities of $X$ conditioned on two instances $s_i$ and $s_j$ of $S$.
    \begin{table}[H]
    \caption{1st example of $P_{X|S}$} \label{tab:ex1}
        \begin{center}
        \begin{tabular}{ccccc}
            \hline\hline
            & $X = 1$ & $X=2$ & $X=3$ & $X=4$\\ \hline
            $P_{X|S}(\cdot|s_i,\rho)$ & 1/3  & 1/6 & 1/3 & 1/6 \\ \hline
            $P_{X|S}(\cdot|s_j,\rho)$   & 1/4  & 1/4 & 1/6 & 1/3 \\  \hline
            \end{tabular}
        \end{center}
    \end{table}
We first obtain the joint cumulative mass function (CMF) of the Kantorovich optimal transport plan $\pi^*((-\infty,x], (-\infty,x']) = \min\Set{F(x|s_i), F(x'|s_j)}$ in Table~\ref{tab:CDF1} and then the join probability mass function (PMF) $\pi^*(x, x')$ in Table~\ref{tab:PDF1}.\footnote{Recall that for $x_1,x_2, x'_1,x'_2 \in \X$ such that $x_1 < x_2$ and $x'_1 < x'_2$, $\pi^*([x_1,x_2],[x'_1,x'_2]) = \pi^*((-\infty,x_2],(-\infty,x'_2])- \pi^*((-\infty,x_1],(-\infty,x'_2]) - \pi^*((-\infty,x_2],(-\infty,x'_1]) + \pi^*((-\infty,x_1],(-\infty,x'_1])$. }
    \begin{table}[H]
    \caption{$\pi^*((-\infty,x],(-\infty,x'])$} \label{tab:CDF1}
        \begin{center}
        \begin{tabular}{ccccc}
            \hline\hline
            & $X' = 1$ & $X'=2$ & $X'=3$ & $X'=4$\\ \hline
            $X= 1$ & 1/4  & 1/3 & 1/3 & 1/3 \\ \hline
            $X= 2$ & 1/4  & 1/2 & 1/2 & 1/2 \\  \hline
            $X= 3$ & 1/4  & 1/2 & 2/3 & 5/6 \\  \hline
            $X= 4$ & 1/4  & 1/2 & 2/3 & 1 \\  \hline
            \end{tabular}
        \end{center}
    \end{table}

    \begin{table}[H]
    \caption{$\pi^*(x, x')$} \label{tab:PDF1}
        \begin{center}
        \begin{tabular}{ccccc}
            \hline\hline
            & $X' = 1$ & $X'=2$ & $X'=3$ & $X'=4$\\ \hline
            $X= 1$ & 1/4  & 1/12 & 0 & 0 \\ \hline
            $X= 2$   & 0  & 1/6 & 0 & 0 \\  \hline
            $X= 3$   & 0  & 0 & 1/6 & 1/6 \\  \hline
            $X= 4$   & 0  & 0 & 0 & 1/6 \\  \hline
            \end{tabular}
        \end{center}
    \end{table}
    We have $\sup_{(x,x')\in \supp(\pi^*)} |x-x'| = 1$. Applying Lemma~1, adding Laplace noise with $\theta = \frac{1}{\epsilon}$ attains pufferfish privacy.

    Table~\ref{tab:ex2} shows the second example of the conditional probabilities
    $P_{X|S}(\cdot|s_i,\rho)$ and $P_{X|S}(x|s_j,\rho)$, for which we have the joint CMF and PMF in Tables~\ref{tab:CDF2} and \ref{tab:PDF2}, respectively. By Lemma~2, adding Laplace noise with $\theta = \frac{2}{\epsilon}$ attains pufferfish privacy.
    \begin{table}[H]
    \caption{2nd example of $P_{X|S}$} \label{tab:ex2}
        \begin{center}
        \begin{tabular}{cccccc}
            \hline\hline
            & $X = 1$ & $X=2$ & $X=3$ & $X=4$ & $X=5$\\ \hline
            $P_{X|S}(\cdot|s_i,\rho)$ & 0.2  & 0.225 & 0.5 & 0.075 & 0 \\ \hline
            $P_{X|S}(\cdot|s_j,\rho)$ & 0  & 0.075 & 0.5 & 0.225 & 0.2 \\  \hline
            \end{tabular}
        \end{center}
    \end{table}

    \begin{table}[H]
    \caption{$\pi^*((-\infty,x],(-\infty,x'])$} \label{tab:CDF2}
        \begin{center}
        \begin{tabular}{cccccc}
            \hline\hline
            & $X' = 1$ & $X'=2$ & $X'=3$ & $X'=4$ & $X'=5$\\ \hline
            $X= 1$ & 0  & 0.075 & 0.2 & 0.2 & 0.2\\ \hline
            $X= 2$ & 0  & 0.075 & 0.425 & 0.425 & 0.425 \\  \hline
            $X= 3$ & 0  & 0.075 & 0.575 & 0.8 & 0.925 \\  \hline
            $X= 4$ & 0  & 0.075 & 0.575 & 0.8 & 1 \\  \hline
            $X= 5$ & 0  & 0.075 & 0.575 & 0.8 & 1 \\  \hline
            \end{tabular}
        \end{center}
    \end{table}

    \begin{table}[H]
    \caption{$\pi^*(x,x')$} \label{tab:PDF2}
        \begin{center}
        \begin{tabular}{cccccc}
            \hline\hline
            & $X' = 1$ & $X'=2$ & $X'=3$ & $X'=4$ & $X'=5$\\ \hline
            $X= 1$ & 0  & 0.075 & 0.125 & 0 & 0\\ \hline
            $X= 2$ & 0  & 0 & 0.225 & 0 & 0 \\  \hline
            $X= 3$ & 0  & 0 & 0.15 & 0.225 & 0.125 \\  \hline
            $X= 4$ & 0  & 0 & 0 & 0 & 0.075 \\  \hline
            $X= 5$ & 0  & 0 & 0 & 0 & 0 \\  \hline
            \end{tabular}
        \end{center}
    \end{table}

\section{PROOF of LEMMA~\ref{lemma:NindUser}}
\label{appB:lemma:NindUser}

In the $V$ independent user system, for the discriminative pair set $\SetPair_i = \Set{(S_i = a, S_i = b) \colon a, b \in \SA_i}$, the sensitivity for the query function $f$ and metric $d$ is
\[ \triangle_f(\SetPair_i) = \max_{a,b\in \SA_i} \max_{s_{-i}} d \big( f(s_i=a ,s_{-i}) - f(s_i=b,s_{-i}) \big) \]
for all $\rho \in \D$.
Using the probabilities derived in Section~\ref{appA:lemma:NindUser},
\begin{align*}
     \frac{P_{Y|S_i}(y|a,\rho)}{P_{Y|S_i}(y|b,\rho)}
     & = \frac{\int P_{N_{\theta}}(y- f(s_i=a, s_{-i})) P_{S_{-i}}(s_{-i}) \dif s_{-i}}{\int P_{N_{\theta}}(y- f(s_i=b, s_{-i})) P_{S_{-i}}(z_{-i}) \dif s_{-i}} \\
     & \leq \frac{\int P_{N_{\theta}}(y- f(s_i=b, s_{-i})) e^{\eta(\theta) d(f(s_i=a,s_{-i})-f(s_i=b,s_{-i})) } P_{S_{-i}}(s_{-i}) \dif s_{-i}}{\int P_{N_{\theta}}(y- f( s_i=b,s_{-i})) P_{S_{-i}}(s_{-i}) \dif s_{-i}}  \\
     &\leq e^{\eta(\theta) \triangle_f(\SetPair_i)}, \quad \forall a,b \in \SA_i.
\end{align*}
Therefore, $\theta = \eta^{-1}(\frac{\epsilon}{\triangle_f(\SetPair_i)})$ is a sufficient condition for $\frac{P_{Y|S_i}(y|a,\rho)}{P_{Y|S_i}(y|b,\rho)} \leq e^{\epsilon}$ for all $a,b \in \SA_i$, $y$ and $\rho$. This proves (a).

For $(\epsilon,\SetPair_i)$-pufferfish private $Y$,  we have $P_{Y|S_i}(y|a,\rho) \leq e^{\epsilon} P_{Y|S_i}(y|b,\rho)$ and $P_{Y|S_i}(y|b,\rho) \leq e^{\epsilon} P_{Y|S_i}(y|a)$ for all $a,b \in \SA_i$, $y$ and $\rho$.
Using the fact that $P_{Y|S_i}(y | \perp_i,\rho) = \int P_{Y|S_i}(y | s_i,\rho) P_{S_i}(s_i) \dif s_i$,
\begin{align*}
    \frac{P_{Y|S_i}(y|\perp_i,\rho)}{P_{Y|S_i}(y|a,\rho)} & = \frac{\int P_{Y|S_i}(y|s_i,\rho) P_{S_i}(s_i) \dif s_i}{P_{Y|S_i}(y|a,\rho)}  \\
    & \leq  \frac{\int e^{\epsilon} P_{Y|S_i}(y|a,\rho) P_{S_i}(s_i) \dif s_i}{P_{Y|S_i}(y|a,\rho)} = e^{\epsilon}, \\
    \frac{P_{Y|S_i}(y|a,\rho)}{P_{Y|S_i}(y|\perp_i,\rho)} & = \frac{P_{Y|S_i}(y|a,\rho)}{\int P_{Y|S_i}(y|s_i,\rho) P_{S_i}(s_i) \dif s_i}  \\
    & = \frac{\int P_{Y|S_i}(y|a,\rho) P_{S_i}(s_i) \dif s_i}{\int P_{Y|S_i}(y|s_i,\rho) P_{S_i}(s_i) \dif s_i} \\
    & \leq \frac{\int e^{\epsilon} P_{Y|S_i}(y|s_i,\rho) P_{S_i}(s_i) \dif s_i}{\int P_{Y|S_i}(y|s_i,\rho) P_{S_i}(s_i) \dif s_i} = e^{\epsilon},
\end{align*}
for all $a \in \SA_i$, $y$ and $\rho$, i.e., $Y$ is $(\epsilon,\SetPair_{\perp_i})$-pufferfish private. This proves (b). \qed

\section{PROOF of THEOREM~\ref{theo:W1Exponential}}
\label{app:theo:W1Exponential}

We derive the following results for each $\rho \in \D$.
For any pair $(s_i,s_j) \in \SetPair$,
\begin{align} \label{eq:W1Exponential:SuffCond1}
     P_{Y|S}(y|s_i,\rho) - e^\epsilon P_{Y|S}(y|s_j,\rho) & = \int P_{N_{\theta}}(y-x) P_{X|S}(x|s_i,\rho) \dif x - e^{\epsilon} \int P_{N_{\theta}}(y-x') P_{X|S}(x'|s_j,\rho) \dif x' \nonumber \\
     & =  \int \big( P_{N_{\theta}}(y-x) - e^{\epsilon} P_{N_{\theta}}(y-x') \big) \dif \pi(x,x')  \nonumber\\
     & \leq \int P_{N_{\theta}}(y-x') \big( e^{\eta(\theta) d(x-x')}  - e^{\epsilon} \big)  \dif \pi(x,x'), \qquad \forall y.
\end{align}
Requesting $e^{\eta(\theta) d(x-x')}  - e^{\epsilon} \leq 0 $ for each pair of $x$ and $x'$, we derive a sufficient condition
\begin{equation} \label{eq:W1Exponential:SuffCond}
    \inf_{\pi \in \Gamma(s_i,s_j)} \sup_{(x,x') \in \supp(\pi)} \big( \eta(\theta) d(x-x') - \epsilon \big) \leq 0
\end{equation}
for $\frac{P_{Y|S}(y|s_i,\rho)}{P_{Y|S}(y|s_j,\rho)} \leq e^{\epsilon}$. Note that the infimum in \eqref{eq:W1Exponential:SuffCond} is for the purpose of searching the minimum value of $\theta$ (over all couplings) that holds the sufficient condition, knowing $\VAR[N_{\theta}] \propto \theta$.

The sufficient condition \eqref{eq:W1Exponential:SuffCond} is equivalent to
\begin{equation} \label{eq:W1Exponential:SuffCond2}
   \inf_{\pi \in \Gamma(s_i,s_j)}  \int \big[ \eta(\theta) d(x-x') - \epsilon \big]_+ \dif \pi(x,x')
    = \int  \big[ \eta(\theta) d(x-x') - \epsilon \big]_+ \dif \pi^*(x,x') \leq 0
\end{equation}
where $\pi^*$ denotes the Kantorovich optimal transport plan. We further convert \eqref{eq:W1Exponential:SuffCond2} to
\begin{align}
    \int  \big[ \eta(\theta) d(x-x') - \epsilon \big]_+ \dif \pi^*(x,x') \leq 0
    & \Longrightarrow \sup_{(x,x') \in \supp(\pi^*)} \eta(\theta) d(x-x') - \epsilon  \leq 0 \nonumber\\
    & \Longrightarrow \eta(\theta) \leq \epsilon / \sup_{(x,x') \in \supp(\pi^*)} d(x-x'). \label{eq:W1Exponential:SuffCond3}
\end{align}
For invertible $\eta$ that is nonincreasing in $\theta$, the minimum value of $\theta$ that holds the inequality \eqref{eq:W1Exponential:SuffCond3} is
\[ \eta^{-1} \Big( \epsilon / \sup_{(x,x') \in \supp(\pi^*)} d(x-x') \Big).\]
Taking the maximum of this value over all $(s_i,s_j) \in \SetPair$ and $\rho \in \D$, we have the sufficient condition
\begin{equation} \label{eq:W1Exponential:SuffCond4}
    \theta = \max_{\rho \in \D, (s_i,s_j) \in \SetPair} \eta^{-1}  \Big(  \epsilon / \sup_{(x,x') \in \supp(\pi^*)} d(x-x')  \Big).
\end{equation}
To make $\frac{P_{Y|S}(y|s_j,\rho)}{P_{Y|S}(y|s_i,\rho)} \leq e^{\epsilon}$, we have the sufficient condition
$ \inf_{\pi \in \Gamma(s_j,s_i)} \sup_{(x',x) \in \supp(\pi)} \big( \eta(\theta) d(x'-x) - \epsilon \big) \leq 0 $
the minimizer for which is $\pi^{*\intercal}$ such that $\pi^{*\intercal}(x',x)=\pi^{*}(x,x')$.
Based on the symmetry property of $d$, i.e., $d(x-x') = d(x'-x)$, $\sup_{(x,x') \in \pi^*} d(x-x') = \sup_{(x',x) \in \pi^{*\intercal}} d(x'-x)$, i.e., \eqref{eq:W1Exponential:SuffCond3} is also a sufficient condition for $\frac{P_{Y|S}(y|s_j,\rho)}{P_{Y|S}(y|s_i,\rho)} \leq e^{\epsilon}$. This completes the proof. \qed

\section{PROOF of LEMMA~\ref{lemma:NindUser} BY THEOREM~\ref{theo:W1Exponential}}
\label{lemmaInt:NindUser}

For the $V$ independent user system, the result that $(\epsilon, \SetPair)$-pufferfish privacy can be attained by calibrating noise to the sensitivity of the query function $f$ in Lemma~\ref{lemma:NindUser} is explained by the support of the Kantorovich optimal transport plan $\pi^*$ in Theorem~\ref{theo:W1Exponential}.

To obtain the cumulative density distribution (CDF) $\pi^*((-\infty,x], (-\infty,x']) = \min\Set{F_{X|S_i}(x|a,\rho), F_{X|S_i}(x'|\perp_i, \
\rho)}$, first consider any $x$ and $x'$ such that $x > x'$. Let $\delta = d(x-x')$. Then, $x = x' + d^{-1}(\delta)$.
For any $a \in \SA_i$, we derive the CDFs $F_{X|S_i}(x|a, \rho)$ and $F_{X|S_i}(x|\perp_i, \rho)$ for the conditional probabilities $P_{X|S_i}(x|a, \rho)$ and $P_{X|S_i}(x|\perp_i, \rho)$, respectively, as follows.

\begin{align*}
    F_{X|S_i}(x|a, \rho) & = \int_{-\infty}^{x} P_{X|S_i}(l|a,\rho) \dif l  \\
    & = \int_{-\infty}^{x} \Pr(f(S_i=a,S_{-i}) = l) \dif l  \\
    & = \int_{-\infty}^{x' + d^{-1}(\delta)} \Pr(f(S_i=a,S_{-i}) = l) \dif l \\
    & = \int_{-\infty}^{x'} \Pr(f(S_i=a,S_{-i}) = l + d^{-1}(\delta)) \dif l, \\
    F_{X|S_i}(x|\perp_i,\rho) & = \int_{-\infty}^{x} P_{X|S_i}(l | \perp_i,\rho) \dif l \\
    & = \int_{-\infty}^{x}  \int P_{X|S_i}(l | s_i,\rho) P_{S_i}(s_i) \dif s_i \dif l \\
    & = \int \Big( \int_{-\infty}^{x}  \Pr(f(S_i=s_i,S_{-i}) = l) \dif l \Big) P_{S_i}(s_i) \dif s_i.
\end{align*}
Comparing $F_{X|S_i}(x|a,\rho)$ and $F_{X|S_i}(x'|b,\rho)$, we have
\begin{align*}
    F_{X|S_i}(x|a,\rho) - F_{X|S_i}(x'|b,\rho)
    & = \int_{-\infty}^{x'} \Pr(f(S_i=a,S_{-i}) = l + d^{-1}(\delta)) \dif l  - \int_{-\infty}^{x'} \Pr(f(S_i=b,S_{-i}) = l) \dif l\\
    & = \int_{-\infty}^{x'} \Big( \Pr(f(S_i=a,S_{-i}) = l + d^{-1}(\delta)) - \Pr(f(S_i=b,S_{-i}) = l) \Big) \dif l \\
    & < 0 , \qquad \forall \delta > \triangle_f(\SetPair_i).
\end{align*}
Similarly, for all $x$ and $x'$ such that $x < x'$, due to the symmetry property of $d$, $d(x-x') = d(x'-x) = \delta \Longrightarrow x' = x + d^{-1}(\delta)$ and
\begin{multline*}
    F_{X|S_i}(x|a,\rho) - F_{X|S_i}(x'|b,\rho) \\
    = \int_{-\infty}^{x'} \Big( \Pr(f(S_i=a,S_{-i}) = l ) - \Pr(f(S_i=b,S_{-i}) = l + d^{-1}(\delta)) \Big) \dif l > 0, \qquad \forall \delta > \triangle_f(\SetPair_i).
\end{multline*}
That is, for all $x,x'$ such that $d(x-x') > \triangle_f(\SetPair_i)$, $\pi^*((-\infty,x], (-\infty,x']) = \min\Set{F_{X|S_i}(x|a,\rho), F_{X|S_i}(x'|\perp_i, \rho)}$ is independent on either $x$ or $x'$ and so
\[ \pi^*(x,x') = \frac{\dif^2}{\dif x  \dif x'} \pi^*((-\infty,x], (-\infty,x']) = 0, \qquad \forall x,x' \colon  d(x-x') > \triangle_f(\SetPair_i). \]
Equivalently,
\[\supp(\pi^*) \subseteq \big\{(x,x') \colon d(x-x') \leq \triangle_f(\SetPair_i) \big\} \]
and so
\[ \sup_{(x,x') \in \supp(\pi^*)} d(x-x') \leq  \triangle_f(\SetPair_i).  \]
Therefore, by Theorem~\ref{theo:W1Exponential}, adding noise $\theta$ with
\[ \theta = \eta^{-1}  \Big(  \frac{\epsilon}{\triangle_f(\SetPair_i)}  \Big). \]
attains $(\epsilon,\SetPair_i)$-pufferfish privacy, which proves Lemma~2(a).

Consider the discriminative pair set $\SetPair_{\perp_i} = \Set{(S_i = a, S_i = \perp_i) \colon a \in \SA_i}$.
For all $a \in \SA_i$, we have
\begin{align*}
    F_{X|S_i}(x|a,\rho) - F_{X|S_i}(x'|\perp_i,\rho)
    & = \int \Big( \int_{-\infty}^{x}  \Pr(f(S_i=a,S_{-i} = l)) \dif l - \int_{-\infty}^{x'} \Pr(f(S_i=s_i,S_{-i} = l)) \dif l \Big) P_{S_i}(s_i) \dif s_i\\
    & = \int \Big( F_{X|S_i}(x|s_i,\rho) - F_{X|S_i}(x|a,\rho) \Big) P_{S_i}(s_i) \dif s_i.
\end{align*}
Again, we have $\supp(\pi^*) \subseteq \big\{(x,x') \colon d(x-x') \leq \triangle_f(\SetPair_i) \big\} $ and $\theta = \max_{(s_i,s_j) \in \SetPair} \eta^{-1}  \big(  \epsilon / \triangle_f(\SetPair_i)  \big)$. This proves Lemma~2(b).

\subsection{Separable Query Function}
In addition, we obtain an extra result for the separable query function.
Assume $f$ is separable, i.e., $f(s) = \sum_{i \in \N} f_i(s_i), \forall s = (s_i \colon i \in \N)$. One example is the counting query $f(s) = \sum_{i \in \N} s_i$.

For any $a,b\in \SA_i$,
\[ f(s_i=a,s_{-i}) - f(s_i=b,s_{-i}) = f_i(a) - f_i(b), \qquad \forall s_{-i}. \]
Let $\delta = f_i(a) - f_i(b) $, we have
\begin{align*}
    F_{X|S_i}(x|a,\rho)
    &= \int_{-\infty}^{x} \Pr( f(S_i = a, S_{-i}) = l  ) \dif l \\
    & = \int_{-\infty}^{x} \Pr( f(S_i = b, S_{-i}) = l - \delta  ) \dif l \\
    & = \int_{-\infty}^{x-\delta} \Pr( f(S_i = b, S_{-i}) = l) \dif l
\end{align*}
and so
\begin{align*}
    F_{X|S_i}(x|a,\rho) - F_{X|S_i}(x'|b,\rho)
    & = \int_{-\infty}^{x-\delta} \Pr( f(S_i = b, S_{-i}) = l) \dif l  - \int_{-\infty}^{x'} \Pr( f(S_i = b, S_{-i}) = l) \dif l\\
    & \begin{cases}
          > 0 & x-x' > \delta \\
          < 0 & x-x' < \delta \\
          =0 & x-x' = \delta
      \end{cases}.
\end{align*}
Therefore,
\begin{align*}
    \pi^*(x,x')
    &= \frac{\dif^2}{\dif x  \dif x'} \pi^*(x,x')\\
    &= \frac{\dif^2}{\dif x  \dif x'} \min\Set{F_{X|S_i}(x|a,\rho), F_{X|S_i}(x'|b,\rho)} = 0 , \qquad \forall x,x' \colon x-x' \neq \delta.
\end{align*}
That is, the support of $\pi^*$ is
\[ \supp(\pi^*) = \big\{(x,x') \colon x-x' = f_i(a) - f_i(b) \big\}.  \]

Also note that for separable query functions $f$,
\begin{align*}
    \triangle_f(\SetPair_i) &= \max_{a,b\in \SA_i} \max_{s_{-i}} d \big( f(s_i=a ,s_{-i}) - f(s_i=b,s_{-i}) \big) \\
    &= \max_{a,b\in \SA_i} d ( f_i(a) - f_i(b)).
\end{align*}
is independent of $S_{-i}$ and
\[ \theta = \eta^{-1}  \Big(  \frac{\epsilon}{\max_{a,b\in \SA_i} d ( f_i(a) - f_i(b))}  \Big). \]

\section{PROOF of THEOREM~\ref{theo:SuffCondRelax}}
\label{app:theo:SuffCondRelax}

The proof starts with the proposition below.
\begin{proposition} \label{prop:SuffCondRelax}
    If  there exists a nonnegative function $D_{\epsilon}(\cdot; \theta)$ such that
    \begin{equation}  \label{eq:SuffCond}
		e^{\eta(\theta) d(z)}- e^{\epsilon} \leq D_{\epsilon}(z; \theta) , \quad \forall \theta,z
	\end{equation}
    and $D_{\epsilon}(\cdot; \theta)$ is convex in $z$ and nonincreasing in $\theta$,
    $(\epsilon,\SetPair)$-pufferfish privacy is attained by adding noise $N_{\theta}$ with any $\theta$ that holds the inequalities
    \begin{subequations} \label{eq:SuffCondRelax}
        \begin{align}
            & \int e^{\eta(\theta) d(x-x')} \pi^*(x,x') \dif x \leq e^{\epsilon} p(x'|s_j), \quad \forall x' \label{eq:SuffCondRelax1} \\
            & \int e^{\eta(\theta) d(x-x')} \pi^*(x,x') \dif x \leq e^{\epsilon} p(x|s_i), \quad \forall x \label{eq:SuffCondRelax2}
        \end{align}
    \end{subequations}
 for all $(s_i,s_j) \in \SetPair$.
\end{proposition}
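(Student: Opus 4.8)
The plan is to fix an arbitrary $\rho\in\D$ and an arbitrary pair $(s_i,s_j)\in\SetPair$ and to establish the two one-sided bounds $P_{Y|S}(y|s_i,\rho)\le e^{\epsilon}P_{Y|S}(y|s_j,\rho)$ and $P_{Y|S}(y|s_j,\rho)\le e^{\epsilon}P_{Y|S}(y|s_i,\rho)$ for every $y$; taking the worst case over $\rho\in\D$ and over $\SetPair$ then gives $(\epsilon,\SetPair)$-pufferfish privacy, and since $\theta$ in Theorem~\ref{theo:SuffCondRelax} is precisely $\max_{\rho\in\D,(s_i,s_j)\in\SetPair}\theta(s_i,s_j)$ this worst case is already accounted for. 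Throughout I would work with the Kantorovich optimal transport plan $\pi^{*}$ of $(s_i,s_j)$, writing $\dif\pi^{*}(x,x')=\pi^{*}(x,x')\dif x\dif x'$, so that its marginals are $\int\pi^{*}(x,x')\dif x=p(x'|s_j)$ and $\int\pi^{*}(x,x')\dif x'=p(x|s_i)$.

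For the first bound I would start from the identity already used in the proof of Theorem~\ref{theo:W1Exponential}, namely $P_{Y|S}(y|s_i,\rho)=\int\!\!\int P_{N_{\theta}}(y-x)\,\pi^{*}(x,x')\dif x\dif x'$, fix $y$ and $x'$, and peel off the $x$-integral with the triangle-inequality cost bound $P_{N_{\theta}}(y-x)\le e^{\eta(\theta)d(x-x')}P_{N_{\theta}}(y-x')$, valid for every $x,x',y$ since $d$ is a metric:
\[
\int P_{N_{\theta}}(y-x)\,\pi^{*}(x,x')\dif x\;\le\;P_{N_{\theta}}(y-x')\int e^{\eta(\theta)d(x-x')}\,\pi^{*}(x,x')\dif x\;\le\;P_{N_{\theta}}(y-x')\,e^{\epsilon}p(x'|s_j),
\]
where the last step is exactly the hypothesis \eqref{eq:SuffCondRelax1}. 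Integrating in $x'$ and using \eqref{eq:Convolve} to recognise $\int P_{N_{\theta}}(y-x')p(x'|s_j)\dif x'=P_{Y|S}(y|s_j,\rho)$ yields the first bound. The second bound is obtained symmetrically by running the same argument with the transposed plan $\pi^{*\intercal}(x',x)=\pi^{*}(x,x')$, a coupling of $P_{X|S}(\cdot|s_j,\rho)$ and $P_{X|S}(\cdot|s_i,\rho)$, together with the symmetry $d(x-x')=d(x'-x)$ and the hypothesis \eqref{eq:SuffCondRelax2}. Combining the two bounds gives $\bigl|\log\tfrac{P_{Y|S}(y|s_i,\rho)}{P_{Y|S}(y|s_j,\rho)}\bigr|\le\epsilon$.

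The hypothesis on $D_{\epsilon}$ does not enter this chain itself; its job is to make \eqref{eq:SuffCondRelax} a genuine relaxation of Theorem~\ref{theo:W1Exponential} and to ensure the noise level in Theorem~\ref{theo:SuffCondRelax} is well defined. Convexity of $D_{\epsilon}(\cdot;\theta)$ in $z$ certifies that the monotone coupling $\pi^{*}$ used above is the Kantorovich optimal transport plan for the effective cost (the comonotone coupling being optimal for any convex cost on $\Real$), so \eqref{eq:SuffCondRelax} is posed over the same canonical $\pi^{*}$ that appears in Theorem~\ref{theo:W1Exponential}; monotonicity of $D_{\epsilon}(\cdot;\theta)$ in $\theta$ — which forces the left-hand sides of \eqref{eq:SuffCondRelax1}--\eqref{eq:SuffCondRelax2} to be nonincreasing in $\theta$ — makes the feasible set of $\theta$ an interval $[\theta(s_i,s_j),\infty)$, so that the maximum over $\rho$ and $\SetPair$ is well defined and automatically no smaller than the pointwise $\theta$ of Theorem~\ref{theo:W1Exponential}. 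The step I expect to need the most care is the interchange of the $x$- and $x'$-integrations in the continuous-$X$ case: it is licensed by Tonelli's theorem because every factor ($P_{N_{\theta}}$, $\pi^{*}$, $e^{\eta(\theta)d}$) is nonnegative, and one only needs \eqref{eq:SuffCondRelax} to hold for almost every $x'$ (resp. $x$) with respect to the relevant marginal. For integer-valued $d$ the integrals reduce to finite sums, nothing has to be checked, and \eqref{eq:SuffCondRelax} collapses to the polynomial system \eqref{eq:SuffCondRelaxpPoly}.
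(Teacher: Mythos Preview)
Your proposal is correct and follows essentially the same approach as the paper: use the triangle-inequality bound $P_{N_\theta}(y-x)\le e^{\eta(\theta)d(x-x')}P_{N_\theta}(y-x')$, split the double integral over $\pi^{*}$, apply \eqref{eq:SuffCondRelax1} to the inner $x$-integral, then integrate in $x'$ (and symmetrically for the other direction via $\pi^{*\intercal}$). The only expository difference is that the paper first routes through the $D_\epsilon$ upper bound \eqref{eq:SuffCond} to argue that $\pi^{*}$ is the minimizing coupling before returning to exactly your direct computation; you correctly identify that $D_\epsilon$ plays only this auxiliary role and is not needed in the main inequality chain itself.
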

\begin{proof}
    Recall the inequality $P_{Y|S}(y|s_i,\rho) - e^\epsilon P_{Y|S}(y|s_j,\rho) \leq \int P_{N_{\theta}}(y-x') \big( e^{\eta(\theta) d(x-x')}  - e^{\epsilon} \big)  \dif \pi(x,x'), \forall y$ obtained in \eqref{eq:W1Exponential:SuffCond1}.
    Upon the condition \eqref{eq:SuffCond}, we have
	\begin{align}
		P_{Y|S}(y|s_i,\rho) - e^\epsilon P_{Y|S}(y|s_j,\rho) &\leq \int P_{N_{\theta}}(y-x') \big( e^{\eta(\theta) d(x-x')}  - e^{\epsilon} \big)  \dif \pi(x,x') \label{eq:prop:SuffCondRelaxAux1} \\
        &\leq \int P_{N_{\theta}}(y-x')  D_{\epsilon}(x-x'; \theta)   \dif \pi(x,x')  \label{eq:prop:SuffCondRelaxAux3}\\
		& \leq \int D_{\epsilon}(x-x'; \theta)   \dif \pi(x,x'). \label{eq:prop:SuffCondRelaxAux4}
	\end{align}
    The inequalities above holds for all $\pi \in \Gamma(s_i,s_j)$. That is, for any coupling $\pi \in \Gamma(s_i,s_j)$, if $\int D_{\epsilon}(x-x'; \theta)   \dif \pi(x,x') \leq 0$, then $\frac{P_{Y|S}(y|s_i)}{P_{Y|S}(y|s_j)} \leq e^{\epsilon}$.

    For $D_{\epsilon}$ nonincreasing in $\theta$, we take the infinum of the integral in \eqref{eq:prop:SuffCondRelaxAux4} over all couplings and request
    \begin{equation} \label{eq:prop:SuffCondRelaxAux5}
        \inf_{\pi \in \Gamma(s_i, s_j)} \int D_{\epsilon} (x-x';\theta) \dif \pi (x,x') \leq 0.
    \end{equation}
    The purpose is to find the smallest value of $\theta$ that holds the sufficient condition.
    It is clear that for $D_{\epsilon}(z;\theta)$ being convex in $z$ for all $\theta$, the minimizer of \eqref{eq:prop:SuffCondRelaxAux5} is the Kantorovich optimal transport plan $\pi^*$, i.e., \eqref{eq:prop:SuffCondRelaxAux5} is equivalent to $\int D_{\epsilon} (x-x';\theta) \dif \pi^* (x,x') \leq 0 $. Instead, we request the integral in \eqref{eq:prop:SuffCondRelaxAux1} under $\pi^*$ to be nonpositive,\footnote{This is also for the purpose of searching the minimum value of $\theta$ that holds the sufficient condition.} i.e.,
    \begin{align*}
		\int P_{N_{\theta}}(y-x') \big( e^{\eta(\theta) d(x-x')}  - e^{\epsilon} \big)  \dif \pi^*(x,x') &= \iint P_{N_{\theta}}(y-x) \big( e^{\eta(\theta) d(x-x')} - e^{\epsilon} \big) \pi^*(x,x') \dif x \dif x' \\
        &= \int P_{N_{\theta}}(y-x')  \int \big( e^{\eta(\theta) d(x-x')} - e^{\epsilon}\big) \pi^*(x,x')  \dif x \dif x' \leq 0.
	\end{align*}
    A sufficient condition to hold this inequality is to make $\int \big( e^{\eta(\theta) d(x-x')} - e^{\epsilon}\big) \pi^*(x,x')  \dif x \leq 0$ for all $x'$, which is equivalent to \eqref{eq:SuffCondRelax1}.

    Similarly, a sufficient condition for $\frac{P_{Y|S}(y|s_j)}{P_{Y|S}(y|s_i)} \leq e^{\epsilon}$ is
    \begin{equation} \label{eq:prop:SuffCondRelaxAux6}
        \int e^{\eta(\theta) d(x'-x)} \pi^{*\intercal}(x',x) \dif x \leq e^{\epsilon} p(x|s_i), \quad \forall x ,
    \end{equation}
    where $\pi^{*\intercal}(x',x) = \pi^*(x,x'), \forall x,x'$. Due to the symmetry property of $d$, the condition \eqref{eq:prop:SuffCondRelaxAux6} is equivalent to \eqref{eq:SuffCondRelax2}.
\end{proof}

Note that the value of $\theta$ in Proposition~\ref{prop:SuffCondRelax} is not determined by the upper bound function $D_{\epsilon}(z; \theta)$ in \eqref{eq:SuffCond}.
That is, we only require the existence of such a function $D_{\epsilon}(z; \theta)$ regardless of the tightness of this upper bound.
We show below that one example of this upper bound function is the piecewise linear function $[\cdot]_+ = \max\Set{\cdot,0}$.

For the exponential mechanism, we have
\begin{align*}
    e^{\eta(\theta) d(z)} - e^{\epsilon} & \leq \frac{e^{\eta(\theta) d(z)} + e^{\epsilon}}{2} \big[ \eta(\theta) d(z) - \epsilon \big]_{+} \\
                                         & \leq \frac{e^{ M \eta(\theta) } + e^{\epsilon}}{2} \big[ \eta(\theta) d(z) - \epsilon \big]_{+},
\end{align*}
where $M = \max_{x,x'} d(x-x')$ assuming $d(\cdot)$ is a bounded measure.\footnote{Here, we apply the inequality for exponential function: $e^{\frac{x+y}{2}} \leq \frac{e^{y} - e^{x}}{y-x} \leq \frac{e^x + e^y}{2}$ for all $x,y \in \Real$.}
Here, $[f(\cdot)]_{+}$ for convex $f$ is convex and $\big[ \eta(\theta) d(z) - \epsilon \big]_{+}$ is nonincreasing in $\theta$ since $\eta \propto \frac{1}{\theta}$.

Consider \eqref{eq:SuffCondRelax1}. For $\eta(\theta)$ nonincreasing in $\theta$, $\inf \Set{\theta \colon \int e^{\eta(\theta) d(x-x')} \pi^*(x,x') \dif x \leq e^{\epsilon} p(x'|s_j) }$ equals to the value of $\theta$ that holds the equality
$$ \int e^{\eta(\theta) d(x-x')} \pi^*(x,x') \dif x = e^{\epsilon} p(x'|s_j) $$
for each $x'$. Taking also the equality of \eqref{eq:SuffCondRelax2}, we have Theorem~2. \qed

\section{PROOF of THEOREM~\ref{theo:GaussN_AISTATS}}
\label{app:theo:GaussN_AISTATS}

For zero-mean Gaussian noise, $P_{N_\theta}(z) = \frac{1}{\sqrt{2\pi} \theta} e^{-\frac{z^2}{2\theta^2}}$ and
\begin{align} \label{eq:GaussAux1}
     P_{Y|S}(y|s_i,\rho) - e^\epsilon P_{Y|S}(y|s_j,\rho) & = \int P_{N_{\theta}}(y-x) P_{X|S}(x|s_i,\rho) \dif x - e^{\epsilon} \int P_{N_{\theta}}(y-x') P_{X|S}(x'|s_j,\rho) \dif x' \nonumber \\
     & =  \int \big( P_{N_{\theta}}(y-x) - e^{\epsilon} P_{N_{\theta}}(y-x') \big) \dif \pi(x,x')  \nonumber\\
     & = \int \frac{1}{\sqrt{2\pi} \theta} \big( e^{-\frac{(y-x)^2}{2\theta^2}}  - e^{\epsilon-\frac{(y-x')^2}{2\theta^2}}\big)  \dif \pi(x,x'), \qquad \forall y.
\end{align}
This equality holds for the Kantorovich optimal transport plan $\pi^*$.\footnote{Choosing $\pi^*$ will necessarily reduce the sensitivity $\triangle$. }
In this case, to have \eqref{eq:GaussAux1}$\leq 0$, we only need to request
\begin{align} \label{eq:GaussAux2}
    \frac{(y-x')^2-(y-x)^2}{2\theta^2} & =  \frac{(y-x + x -x')^2-(y-x)^2}{2\theta^2} \nonumber\\
                        & = \frac{2(x -x')(y-x)+(x-x')^2}{2\theta^2}  \nonumber \\
                        & \leq \frac{2\triangle|y-x| + \triangle^2}{2\theta^2} \leq \epsilon,
\end{align}
where $\triangle = \sup_{(x,x') \in \supp(\pi^*)} |x-x'|$.
We follow the same approach in \citet[Appendix~A]{Dwork2014book} to prove (a). Let $\theta = \frac{\triangle}{\epsilon} c$, where $c \geq 0$. Then, $\frac{\theta}{\triangle} = \frac{c}{\epsilon}$.
Rewriting inequality \eqref{eq:GaussAux2} to
\begin{align*}
    \frac{\triangle|y-x|}{\theta^2} \leq \epsilon -\frac{\triangle^2}{2\theta^2}  \quad
    &\Longrightarrow \quad  \frac{|y-x|}{\theta} \leq \epsilon \frac{\theta}{\triangle} - \frac{\triangle}{2\theta} \\
    &\Longrightarrow \quad  \frac{|y-x|}{\theta} \leq c - \frac{\epsilon}{2c}.
\end{align*}
Here, $\frac{Y-X}{\theta}$ follows standard normal distribution.

Recall that for standard normal distributed random variable $Z$, we have a lower bound on tail probability: $\Pr(Z > t) > \frac{1}{\sqrt{2 \pi } t} e^{-\frac{t^2}{2}}$.
For $t \geq 0 $, we are seeking the condition on $t$ that holds inequality $\Pr(|Z| > t) < \delta$, which (due to the symmetry of Gaussian distribution) can be enforced on the positive range:
\begin{align} \label{eq:GaussAux3}
    \Pr(Z > t) < \frac{1}{\sqrt{2 \pi } t} e^{-\frac{t^2}{2}} < \frac{\delta}{2}
    \quad &\Longrightarrow \quad t e^{\frac{t^2}{2}} > \sqrt{\frac{2}{\pi}} \frac{1}{\delta}  \nonumber \\
     &\Longrightarrow \quad \log t + \frac{t^2}{2} > \log \sqrt{\frac{2}{\pi}} \frac{1}{\delta}
\end{align}
So, for $Z = \frac{Y-X}{\theta}$, $t = c - \frac{\epsilon}{2c}$ with $c \geq \sqrt{\frac{\epsilon}{2}}$, we need to determine $c$ such that
\begin{align}
    \underbrace{\log \big( c - \frac{\epsilon}{2c} \big)}_{A} + \frac{1}{2} \underbrace{ \big( c^2 - \epsilon + \frac{\epsilon^2}{4c}}_{B} \big) > \log \sqrt{\frac{2}{\pi}} \frac{1}{\delta}
\end{align}
For $\epsilon \leq 1$, we set $c \geq \frac{3}{2}$ to have $A>0$, for which, $B \geq c^2 - \frac{8}{9}$. We instead request $c^2 - \frac{8}{9} \geq 2 \log \sqrt{\frac{2}{\pi}} \frac{1}{\delta}$ and have $c^2 \geq 2 \log (1.25 / \delta)$.

We use the inequality $\log t \leq t-1, \forall t > 0$ to prove $(b)$. Alternative to \eqref{eq:GaussAux3}, request
\begin{align}
    \log t + \frac{t^2}{2} \geq \log t + \log \frac{t^2}{2} + 1 > \log \sqrt{\frac{2}{\pi}} \frac{1}{\delta}
    \quad &\Longrightarrow \quad \log \frac{t^3}{2} \sqrt{\frac{\pi}{2}} \delta > -1 \\
    &\Longrightarrow \quad  t > \left( \frac{2}{e} \right)^{\frac{1}{3}} \left( \frac{2}{\pi}\right)^{\frac{1}{6}} \delta^{-\frac{1}{3}}.
\end{align}
As $ (2/e)^{\frac{1}{3}} (2/\pi)^{1/6} = 0.8373$, we need to have $t > 0.84 \delta^{-\frac{1}{3}} $.
For $Z = \frac{Y-X}{\theta}$ and $t = c - \frac{\epsilon}{2c}$ with $c \geq \sqrt{\frac{\epsilon}{2}}$,
\begin{align}
    c - \frac{\epsilon}{2c} > 0.84 \delta^{-\frac{1}{3}}
    \quad &\Longrightarrow \quad c^2 - 0.84 \delta^{-\frac{1}{3}} c - \frac{\epsilon}{2} > 0 \\
    &\Longrightarrow \quad  c > 0.41 \delta^{-\frac{1}{3}} + \sqrt{ (0.41\delta^{-\frac{1}{3}})^2 + \frac{\epsilon}{2} }
\end{align}

This proves (b). \qed

\end{document}